\definecolor{armygreen}{rgb}{0.29, 0.33, 0.13}
\definecolor{auburn}{rgb}{0.43, 0.21, 0.1}
\definecolor{burgundy}{rgb}{0.5, 0.0, 0.13}
\definecolor{medium red}{rgb}{.490,.298,.337}
\definecolor{dark red}{rgb}{.235,.141,.161}
\definecolor{dark green}{rgb}{0.0,0.5,0.0}
\newtheorem{theorem}{Theorem}
\newtheorem*{theorem*}{Theorem}
\newtheorem{proposition}{Proposition}
\newtheorem{claim}{Claim}[section]
\newtheorem{lemma}{Lemma}[section]
\newtheorem{corollary}{Corollary}
\theoremstyle{definition}
\newtheorem{definition}{Definition}
\theoremstyle{definition}
\newtheorem{example}{Example}
\theoremstyle{definition}
\newtheorem{remark}{Remark}
\theoremstyle{definition}
\newtheorem{note}{Note}
\newcommand*{\claimproofname}{Proof of Claim}
\newenvironment{claimproof}[1][\claimproofname]{\begin{proof}[#1]}{\end{proof}}
\tikzset{
	treenode/.style = {shape=rectangle, rounded corners,
		draw, align=center,
		top color=white, bottom color=blue!20},
	root/.style = {shape=rectangle, rounded corners,
		draw, align=center,
		top color=white, bottom color=blue!20},
	root1/.style = {shape=rectangle, rounded corners,
		draw, align=center,
		top color=white, bottom color=red!40},
	root2/.style = {shape=rectangle, rounded corners,
		draw, align=center,
		top color=white, bottom color=red!20},
	env/.style = {shape=rectangle, rounded corners,
		draw, align=center,
		top color=white, bottom color=blue!20},
}
\title{Equivalence between individual and group strategy-proofness under stability\thanks{I thank Hector Chade and Alexander Teytelboym for helpful discussions.}}
\author{Pinaki Mandal\thanks{E-mail: \textit{pinaki.mandal@asu.edu}} }
\affil{Department of Economics, Arizona State University, USA}
\date{ }
\begin{document} 	
	\maketitle
	
	\begin{abstract}
		This paper studies the (group) strategy-proofness aspect of two-sided matching markets under stability. 
		For a one-to-one matching market, we show an equivalence between individual and group strategy-proofness under stability.
		We obtain this equivalence assuming the domain satisfies a richness condition. 
		However, the result cannot be extended to the many-to-one matching markets.
		We further consider a setting with single-peaked preferences and characterize all domains compatible for stability and (group) strategy-proofness.
	\end{abstract}

	\noindent \textbf{Keywords:} {Two-sided matching; Stability; Strategy-proofness; Group strategy-proofness; Single-peaked preferences} \\\medskip
	\noindent \textbf{JEL Classification:} C78; D71; D82

	\newpage

	\section{Introduction}

	The theory of two-sided matching markets has interested researchers for its relevance to the design of real-world institutions, such as assigning graduates to residency programs (National Resident Matching Program) or students to schools (Boston Public Schools). In this paper, we deal with the simplest case -- the \textit{marriage problem} \citep{gale1962college}, a well-known one-to-one matching market. In this market, there are two finite disjoint sets of agents, ``men'' and ``women''. Each agent on one side of the market has a strict preference over the agents on the other side and the \textit{outside option}, where the outside option denotes the possibility of remaining unmatched. A matching between men and women is selected based on the agents' preferences, where each agent on one side of the market can be matched with at most one agent on the other side.
	
	The \textit{deferred acceptance (DA) rule} \citep{gale1962college} is the salient rule for such a market due to its theoretical appeal.
	\begin{enumerate}[(a)]
		\item It is a \textit{stable} matching rule (see \citet{gale1962college}).\footnote{In real-world applications, empirical studies have shown that stable mechanisms often succeed whereas unstable ones often fail. For a summary of this evidence, see \citet{roth2002economist}.} A matching is stable if it is \textit{individually rational} and no pair of agents, one on each side, would rather be matched to each other than to their present match.
		
		\item For the proposing side, not only it is \textit{strategy-proof} but also \textit{group strategy-proof} (see \citet{dubins1981machiavelli}). A matching rule is strategy-proof if truthful revelation of preferences is a weakly dominant strategy for the agents. Group strategy-proofness, a stronger condition than strategy-proofness, ensures that no group of coordinated agents can be strictly better off by misreporting their preferences.
	\end{enumerate} 
	However, the DA rule is not strategy-proof for all agents (in fact, no stable matching rule is; see \citet{roth1982economics}), and consequently, is not group strategy-proof for all agents. 
	
	Our motivation behind this paper is twofold. First, analyze the structure of manipulative coalitions for the DA rule, and second, identify the conditions for the DA rule to be group strategy-proof. We are focusing on group strategy-proofness, not just on strategy-proofness; what use would it be to guarantee that no single agent could cheat if a few of them could jointly manipulate?
	
	As we have mentioned earlier, \citet{dubins1981machiavelli} show that no coalition of men can manipulate the \textit{men-proposing DA (MPDA) rule}, while \citet{roth1982economics} shows that the MPDA rule is not even strategy-proof for women. Therefore, whenever the MPDA rule is manipulable by a coalition, there must be at least one woman in that coalition. In Theorem \ref{theorem structure of manipulative coalitions}, we show that the manipulative coalition not only contains at least one woman but must be a group of women, extending the result of \citet{dubins1981machiavelli}. We further show that whenever a coalition manipulates the MPDA rule, every woman in the market weakly benefits while every man in the market weakly suffers (Proposition \ref{proposition update in welfare}), and the set of unmatched agents remains the same (Proposition \ref{proposition same set of matched agents}). One key implication of Proposition \ref{proposition same set of matched agents} is that an unmatched agent cannot be a part of manipulating the DA rule. 
	
	We next identify the conditions for the DA rule to be group strategy-proof. \citet{alcalde1994top} identify a restriction on the domain, called \textit{top dominance}, and show that top dominance for women is a sufficient condition for the MPDA rule to be strategy-proof.\footnote{Top dominance for women is also a necessary domain restriction for the MPDA rule to be strategy-proof under two domain conditions (see Theorem 4 in \citet{alcalde1994top}).} In Proposition \ref{proposition wgsp existence}, we show that it is also sufficient for the MPDA rule to be group strategy-proof. 
	As it turns out, this coincidence is not an implication of top dominance but rather a property of the DA rule. In particular, we find an equivalence between strategy-proofness and group strategy-proofness for the DA rule. 
	We obtain this equivalence assuming the domain satisfies a richness condition, called \textit{unrestricted top pairs} \citep{alva2017manipulation}, for at least one side of the market. The richness condition roughly requires that for every ordered pair of outcomes for an agent, there is an admissible preference that ranks them first and second. For example, if every strict preference is admissible for every man, the corresponding domain satisfies unrestricted top pairs for men. 
	Another result of ours shows that on a domain satisfying unrestricted top pairs for at least one side of the market, if there exists a stable and strategy-proof matching rule, it must be unique and is the DA rule (Lemma \ref{lemma only if}). 
	Combining all these results, we have our main result -- the equivalence between strategy-proofness and group strategy-proofness under stability (Theorem \ref{theorem sp wgsp equivalence}). 
	\citet{barbera2016group} show such an equivalence for general private good economies (which also encompasses the marriage problem). They obtain their result assuming a richness condition of the domain and two conditions of the rule (neither of them is stability). However, there is no connection between our richness condition and theirs (i.e., neither of them implies the other), and they assume (group) strategy-proofness only for one side of the market. Therefore, our equivalence result cannot be deduced from their result.
	
	So far, all the results assume that the agents can have strict but otherwise arbitrary preferences. However, in many circumstances, preferences may well be restricted. A natural restriction is as follows: the agents on each side are ordered based on a certain criterion, say age, and the preferences respect these orders in the sense that as one moves away from his/her most preferred choice, his/her preference declines. Such restriction is known as \textit{single-peakedness} \citep{black1948rationale}. In Theorem \ref{theorem single-peaked}, we characterize all single-peaked domains compatible for stability and (group) strategy-proofness under two domain conditions; namely, \textit{cyclical inclusion} and \textit{anonymity}. 
	Cyclical inclusion roughly requires that for every pair of outcomes for an agent, if there exists an admissible preference that prefers the first outcome to the second, then there is another admissible preference that prefers the second outcome to the first. 
	Anonymity requires every man to have the same set of admissible preferences, and so for the women. 
	Unlike Theorem \ref{theorem sp wgsp equivalence}, Theorem \ref{theorem single-peaked} does not show equivalence between strategy-proofness and group strategy-proofness under stability. It rather shows that under cyclical inclusion and anonymity, the set of single-peaked domains compatible for stability and strategy-proofness is the same as that for stability and group strategy-proofness.
	
	Finally, in Section \ref{section many-to-one}, we discuss to which extent we can generalize our results to many-to-one matching markets. We briefly describe a well-studied market -- the \textit{college admissions problem} \citep{gale1962college}, and show the impossibility of extending our results to this market by means of an example.
	
	The layout of the paper is as follows. In Section \ref{section preliminaries}, we introduce basic notions and notations that we use throughout the paper, describe our model, define matching rules and discuss their standard properties, and present the DA rule. Section \ref{section results} presents our results. Section \ref{section single-peaked} considers a setting with single-peaked preferences. In Section \ref{section many-to-one}, we discuss to which extent we can generalize our results to many-to-one matching markets. Finally, the Appendix contains the proofs.

	\section{Preliminaries}\label{section preliminaries}

	\subsection{Basic notions and notations}

	For a finite set $X$, let $\mathbb{L}(X)$ denote the set of all strict linear orders over $X$.\footnote{A \textbf{\textit{strict linear order}} is a semiconnex, asymmetric, and transitive binary relation.} An element of $\mathbb{L}(X)$ is called a \textbf{\textit{preference}} over $X$. 
	For a preference $P \in \mathbb{L}(X)$ and distinct $x, y \in X$, $x \mathrel{P} y$ is interpreted as ``$x$ is preferred to $y$ according to $P$''.
	For $P \in \mathbb{L}(X)$, let $R$ denote the weak part of $P$, i.e., for any $x, y \in X$, $x \mathrel{R} y$ if and only if \big[$x \mathrel{P} y$ or $x = y$\big].
	Furthermore, for $P \in \mathbb{L}(X)$ and non-empty $X' \subseteq X$, let $\tau(P, X')$ denote the most preferred element in $X'$ according to $P$, i.e., $\tau(P,X') = x$ if and only if \big[$x \in X'$ and $x \mathrel{P} y$ for all $y \in X' \setminus \{x\}$\big]. For ease of presentation, we denote $\tau(P, X)$ by $\tau(P)$.

	\subsection{Model}

	There are two finite disjoint sets of agents, the set of \textit{men} $M = \{m_1, \ldots, m_p\}$ and the set of \textit{women} $W = \{w_1, \ldots, w_q\}$. Let $A = M \cup W$ be the set of all agents. Throughout this paper, we assume $p, q \geq 2$. Let $\emptyset$ denote the \textbf{\textit{outside option}} -- the null agent.	
	
	Each man $m$ has a preference $P_m$ over $W \cup \{\emptyset\}$, the set of all women and the outside option. The position in which he places the outside option in the preference has the meaning that the only women he is willing to be matched with are those whom he prefers to the outside option. Similarly, each woman $w$ has a preference $P_w$ over $M \cup \{\emptyset\}$. We say that woman $w$ is \textbf{\textit{acceptable}} to man $m$ if $w \mathrel{P_m} \emptyset$, and analogously, man $m$ is \textit{acceptable} to woman $w$ if $m \mathrel{P_w} \emptyset$.
	
	We denote by $\mathcal{P}_a$ the set of admissible preferences for agent $a \in A$. Clearly, $\mathcal{P}_m \subseteq \mathbb{L}(W \cup \{\emptyset\})$ for all $m \in M$ and $\mathcal{P}_w \subseteq \mathbb{L}(M \cup \{\emptyset\})$ for all $w \in W$.
	A \textbf{\textit{preference profile}}, denoted by $P_A = (P_{m_1}, \ldots, P_{m_p}, P_{w_1}, \ldots, P_{w_q})$, is an element of the Cartesian product $\mathcal{P}_A := \underset{i=1}{\overset{p}{\prod}}\mathcal{P}_{m_i} \times \underset{j=1}{\overset{q}{\prod}}\mathcal{P}_{w_j}$, that represents a collection of preferences -- one for each agent.
	Furthermore, as is the convention, $P_{-a}$ denotes a collection of preferences of all agents except for $a$. Also, for $A' \subseteq A$, let $P_{A'}$ denote a collection of preferences of all agents in $A'$ and $P_{-A'}$ a collection of preferences of all agents not in $A'$.

	\subsection{Matching rules and their stability}

	A \textbf{\textit{matching}} (between $M$ and $W$) is a function $\mu : A \to A \cup \{\emptyset\}$ such that 
	\begin{enumerate}[(i)]
		\item\label{item matching definition 1} $\mu(m) \in W \cup \{\emptyset\}$ for all $m \in M$,
		
		\item\label{item matching definition 2} $\mu(w) \in M \cup \{\emptyset\}$ for all $w \in W$, and
		
		\item\label{item matching definition 3} $\mu(m) = w$ if and only if $\mu(w) = m$ for all $m \in M$ and all $w \in W$.
	\end{enumerate}
	Here, $\mu(m) = w$ means man $m$ and woman $w$ are matched to each other under the matching $\mu$, and $\mu(a) = \emptyset$ means agent $a$ is unmatched under the matching $\mu$. We denote by $\mathcal{M}$ the set of all matchings.
	
	A matching $\mu$ is \textbf{\textit{individually rational}} at a preference profile $P_A$ if for every $a \in A$, we have $\mu(a) \mathrel{R_a} \emptyset$. A matching $\mu$ is \textit{blocked} by a pair $(m,w) \in M \times W$ at a preference profile $P_A$ if $w \mathrel{P_m} \mu(m)$ and $m \mathrel{P_w} \mu(w)$. A matching is \textbf{\textit{stable}} at a preference profile if it is individually rational and is not blocked by any pair at that preference profile.
	
	A \textbf{\textit{matching rule}} is a function $\varphi: \mathcal{P}_A \to \mathcal{M}$. For a matching rule $\varphi: \mathcal{P}_A \to \mathcal{M}$ and a preference profile $P_A \in \mathcal{P}_A$, let $\varphi_a(P_A)$ denote the match of agent $a$ by $\varphi$ at $P_A$.

	\begin{definition}\label{definition stability}
		A matching rule $\varphi: \mathcal{P}_A \to \mathcal{M}$ is \textbf{\textit{stable}} if for every $P_A \in \mathcal{P}_A$, $\varphi(P_A)$ is stable at $P_A$.
	\end{definition}

	\subsection{Incentive properties of matching rules}

	In practice, matching rules are often designed to satisfy incentive properties. Two well-studied such requirements are \textit{strategy-proofness} and \textit{group strategy-proofness}.

	\begin{definition}\label{definition incentive properties}
		A matching rule $\varphi: \mathcal{P}_A \to \mathcal{M}$ is
		\begin{enumerate}[(i)]
			\item \textbf{\textit{strategy-proof}} if for every $P_A \in \mathcal{P}_A$, every $a \in A$, and every $\tilde{P}_a \in \mathcal{P}_a$, we have $\varphi_a(P_A) \mathrel{R_a} \varphi_a(\tilde{P}_a, P_{-a})$.
			
			\item \textit{\textbf{group strategy-proof}} if for every $P_A \in \mathcal{P}_A$, there do not exist a set of agents $A' \subseteq A$ and a preference profile $\tilde{P}_{A'}$ of the agents in $A'$ such that $\varphi_a(\tilde{P}_{A'},P_{-A'}) \mathrel{P_a} \varphi_a(P_A)$ for all $a \in A'$.
		\end{enumerate}
	\end{definition}

	If a matching rule $\varphi$ on $\mathcal{P}_A$ is not group strategy-proof, then there exist $P_A \in \mathcal{P}_A$, a set of agents $A' \subseteq A$, and a preference profile $\tilde{P}_{A'}$ of the agents in $A'$ such that $\varphi_a(\tilde{P}_{A'},P_{-A'}) \mathrel{P_a} \varphi_a(P_A)$ for all $a \in A'$. In such cases, we say that \textit{$\varphi$ is manipulable at $P_A$ by coalition $A'$ via $\tilde{P}_{A'}$}. Note that a coalition can be a singleton, and thus, group strategy-proofness implies strategy-proofness. 
	
	Notice that all agents in the manipulative coalition should be strictly better off from misreporting. We consider this requirement compelling, since it leaves no doubt regarding the incentives for each member of the coalition to participate in a collective deviation from truthful revelation.

	\subsection{Deferred acceptance}\label{section DA}

	\textit{Deferred acceptance (DA) rule} \citep{gale1962college} is the salient rule in our model for its theoretical appeal. 
	\begin{enumerate}[(a)]
		\item It is a stable matching rule. In fact, it is the stable matching rule optimal for the agents on the ``proposing'' side (see \citet{gale1962college} for details).
		
		\item For the ``proposing'' side of the market, not only it is strategy-proof but also group strategy-proof (see \citet{dubins1981machiavelli} for details).
	\end{enumerate}	
	
	There are two types of the DA rule: the \textit{men-proposing DA (MPDA) rule} -- denoted by $D^M$, and the \textit{women-proposing DA (WPDA) rule}. In the following, we provide a description of the MPDA rule at a preference profile $P_A$. The same of the WPDA rule can be obtained by interchanging the roles of men and women in the MPDA rule.

	\begin{itemize}[leftmargin = 1.35cm]
		\item[\textbf{\textit{Step 1.}}] Each man $m$ proposes to his most preferred acceptable woman (according to $P_m$).\footnote{That is, if the most preferred woman of a man is acceptable to that man, he proposes to her. Otherwise, he does not propose to anybody.} Every woman $w$, who has at least one proposal, tentatively keeps her most preferred acceptable man (according to $P_w$) among these proposals and rejects the rest.
		
		\item[\textbf{\textit{Step $2$.}}] Every man $m$, who was rejected in the previous step, proposes to his next preferred acceptable woman. Every woman $w$, who has at least one proposal including any proposal tentatively kept from the earlier steps, tentatively keeps her most preferred acceptable man among these proposals and rejects the rest.
	\end{itemize}

	This procedure is then repeated from Step 2 till a step such that for each man, one of the following two happens: (i) he is accepted by some woman, (ii) he has proposed to all acceptable women. At this step, the proposal tentatively accepted by women becomes permanent. This completes the description of the MPDA rule.

	\begin{remark}[\citealp{gale1962college}]\label{remark gale result}
		On the unrestricted domain $\mathbb{L}^p(W \cup \{\emptyset\}) \times \mathbb{L}^q(M \cup \{\emptyset\})$, both the DA rules are stable.
	\end{remark}

	\section{Results}\label{section results}

	\subsection{Structure of manipulative coalitions for the MPDA rule}

	\citet{dubins1981machiavelli} show that no coalition of men can manipulate the MPDA rule on the unrestricted domain, while \citet{roth1982economics} shows that no stable matching rule on the unrestricted domain is strategy-proof.\footnote{\citet{roth1982economics} proves this result in a setting without outside options and with an equal number (at least three) of men and women. However, the result can be extended to our setting (i.e., with outside options and with arbitrary values (at least two) of the number of men and the number of women). See Example \ref{example both side unrestricted no sp rule} for a stronger result.} In view of these results, it follows that whenever the MPDA rule is manipulable by a coalition, at least one woman must be in that coalition. It turns out that the coalition not only contains at least one woman, but must be a group of women.

	\begin{theorem}\label{theorem structure of manipulative coalitions}
		Suppose a coalition $A' \subseteq A$ manipulates the MPDA rule at some preference profile. Then, $A' \subseteq W$.
	\end{theorem}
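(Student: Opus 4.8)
The plan is to argue by contradiction, using as the main tool the classical Blocking Lemma for the man-optimal stable matching (which extends without difficulty to our setting with an outside option). Suppose a coalition $A'$ manipulates $D^M$ at a profile $P_A$ via $\tilde{P}_{A'}$, and write $M' := A' \cap M$, $W' := A' \cap W$, $\mu := D^M(P_A)$, and $\mu' := D^M(\tilde{P}_{A'}, P_{-A'})$; assume for contradiction that $M' \neq \emptyset$. First I would check that $\mu'$ is individually rational at the \emph{true} profile $P_A$: for $a \in A'$ we have $\mu'(a) \mathrel{P_a} \mu(a) \mathrel{R_a} \emptyset$, using the manipulation inequality together with individual rationality of the stable matching $\mu$ at $P_A$; for $a \notin A'$, the report of $a$ in $(\tilde{P}_{A'},P_{-A'})$ is the truthful $P_a$, so individual rationality of $\mu'$ at $(\tilde{P}_{A'},P_{-A'})$ already yields $\mu'(a) \mathrel{R_a} \emptyset$.

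Next, set $M(\mu') := \{\, m \in M : \mu'(m) \mathrel{P_m} \mu(m) \,\}$, which contains $M'$ and is therefore nonempty. Since $\mu = D^M(P_A)$ is the man-optimal stable matching at $P_A$ and $\mu'$ is individually rational at $P_A$, the Blocking Lemma provides a pair $(m^{\ast}, w^{\ast})$ blocking $\mu'$ at $P_A$ such that $m^{\ast} \in M \setminus M(\mu')$ and $w^{\ast}$ is matched under $\mu'$ to a man of $M(\mu')$; write $m' := \mu'(w^{\ast}) \in M(\mu')$.

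The third step turns the standing hypothesis $M' \neq \emptyset$ into a blocking pair of $\mu$ itself, which is the contradiction. Since $m^{\ast} \notin M(\mu') \supseteq M'$, the man $m^{\ast}$ lies outside the coalition $A'$ and reports truthfully in $(\tilde{P}_{A'}, P_{-A'})$; as $\mu'$ is stable at that profile, $(m^{\ast}, w^{\ast})$ does not block it there, and since $m^{\ast}$ still ranks $w^{\ast}$ above $\mu'(m^{\ast})$, the only possibility is that $w^{\ast}$'s \emph{report} there ranks $\mu'(w^{\ast})$ above $m^{\ast}$, which (given that truthfully $m^{\ast} \mathrel{P_{w^{\ast}}} \mu'(w^{\ast})$) forces $w^{\ast} \in W'$. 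Now $m' = \mu'(w^{\ast}) \in M(\mu')$ gives $w^{\ast} = \mu'(m') \mathrel{P_{m'}} \mu(m')$, while $w^{\ast} \in W' \subseteq A'$ gives $m' = \mu'(w^{\ast}) \mathrel{P_{w^{\ast}}} \mu(w^{\ast})$. Hence $(m', w^{\ast})$ blocks $\mu = D^M(P_A)$ at $P_A$, contradicting stability of $D^M$; therefore $M' = \emptyset$, i.e.\ $A' \subseteq W$.

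The step I expect to be delicate is invoking the Blocking Lemma in exactly the strength needed: not merely the existence of a blocking pair of $\mu'$ whose man avoids $M(\mu')$, but the sharper fact that the blocking woman $w^{\ast}$ is matched under $\mu'$ to a man \emph{inside} $M(\mu')$ --- this is precisely what supplies $w^{\ast} = \mu'(m') \mathrel{P_{m'}} \mu(m')$ and lets the argument close. If only a weaker form of the lemma is on hand, this refinement has to be extracted from its proof. A secondary, routine point is the bookkeeping around the outside option and the truthful non-coalition members: that $w^{\ast}$ is genuinely matched under $\mu'$ (immediate once $\mu'(w^{\ast}) \in M(\mu')$), and that individual rationality and the no-blocking condition transfer correctly between the truthful and misreported profiles.
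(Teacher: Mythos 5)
Your proof is correct and follows essentially the same route as the paper's: the paper proves Theorem \ref{theorem structure of manipulative coalitions} via the stronger Proposition \ref{proposition update in welfare}\ref{item structure men weakly worse}, using exactly your steps --- individual rationality of the manipulated outcome at the true profile, the Blocking Lemma applied to the set of strictly-better-off men, and an analysis of which members of the resulting blocking pair can lie in the coalition. The only cosmetic difference is that the paper shows the blocking woman is \emph{outside} the coalition and contradicts stability of $D^M(\tilde{P}_{A'},P_{-A'})$ at the misreported profile, whereas you show she must be \emph{inside} it and contradict stability of $D^M(P_A)$ at the true profile; these are the two contrapositive halves of the same argument.
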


	The result of \citet{dubins1981machiavelli} follows from Theorem \ref{theorem structure of manipulative coalitions}.

	\begin{corollary}[\citealp{dubins1981machiavelli}]\label{corollary dubins}
		On the unrestricted domain $\mathbb{L}^p(W \cup \{\emptyset\}) \times \mathbb{L}^q(M \cup \{\emptyset\})$, no coalition of men can manipulate the MPDA rule.
	\end{corollary}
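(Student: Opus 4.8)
The plan is to argue by contradiction. Suppose the coalition $A'$ manipulates the MPDA rule at a profile $P_A$ via $\tilde{P}_{A'}$ and that $A' \cap M \neq \emptyset$; the goal is to exhibit a pair that blocks the man‑optimal stable matching $D^M(P_A)$ at $P_A$, contradicting the stability of $D^M$. Write $P'_A := (\tilde{P}_{A'}, P_{-A'})$, $\nu := D^M(P_A)$, and $\mu := D^M(P'_A)$, so that $\mu_a \mathrel{P_a} \nu_a$ for every $a \in A'$. First I would record two facts about $\mu$ read against the \emph{true} profile $P_A$. (i) $\mu$ is individually rational at $P_A$: for $a \in A'$ we have $\mu_a \mathrel{P_a} \nu_a \mathrel{R_a} \emptyset$ by stability of $\nu$ at $P_A$; for $a \notin A'$, agent $a$ reports $P_a$ in $P'_A$, so $\mu_a \mathrel{R_a} \emptyset$ follows from individual rationality of $\mu$ at $P'_A$. (ii) Every pair $(m,w)$ that blocks $\mu$ at $P_A$ contains a member of $A'$: otherwise $m$ and $w$ both report truthfully in $P'_A$, and then $(m,w)$ would block $\mu$ at $P'_A$, contradicting stability of $D^M$ at $P'_A$.

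The crux is the next step, where I would invoke the Blocking Lemma for the man‑optimal stable matching. Let $M^{+} := \{\, m \in M : \mu_m \mathrel{P_m} \nu_m \,\}$. Since $A' \cap M \subseteq M^{+}$ and $A' \cap M \neq \emptyset$, the set $M^{+}$ is nonempty; together with fact (i) this lets me apply the Blocking Lemma to $\mu$, viewed as an individually rational matching at $P_A$, to obtain a man $m_0 \notin M^{+}$ and a woman $w_0$ with $\mu(w_0) \in M^{+}$ such that $(m_0, w_0)$ blocks $\mu$ at $P_A$. Now $m_0 \notin M^{+} \supseteq A' \cap M$ forces $m_0 \notin A'$, so by fact (ii) we must have $w_0 \in A'$, i.e.\ $w_0$ is one of the manipulating women; in particular $\mu_{w_0} \mathrel{P_{w_0}} \nu_{w_0}$.

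To finish, set $m_1 := \mu(w_0) \in M^{+}$. Then $\mu_{m_1} \mathrel{P_{m_1}} \nu_{m_1}$, and since $\mu(w_0) = m_1$ means $\mu_{m_1} = w_0$, this reads $w_0 \mathrel{P_{m_1}} \nu_{m_1}$; on the other side, $m_1 = \mu_{w_0} \mathrel{P_{w_0}} \nu_{w_0}$. Hence $(m_1, w_0)$ blocks $\nu = D^M(P_A)$ at $P_A$, contradicting the stability of the MPDA rule. Therefore $A' \cap M = \emptyset$, that is, $A' \subseteq W$. Incidentally, this argument never invokes the Dubins–Freedman theorem; that result is recovered as the special case in which $A' \cap W = \emptyset$.

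I expect the main obstacle to be recognizing that the Blocking Lemma is the right device here: it is precisely what turns ``$\mu$ is individually rational at $P_A$ while some man strictly gains from the manipulation'' into a blocking pair of $\mu$ whose man lies outside $M^{+}$ — hence outside $A'$ — which is what pins the partnered woman $w_0$ into $A'$ and produces the blocking pair of $\nu$. A secondary point requiring care is the bookkeeping of which agents' reports coincide with their true preferences when passing between $P_A$ and $P'_A$ (used in both (i) and (ii)), and applying the Blocking Lemma in the precise form that guarantees $w_0$ is genuinely matched by $\mu$ to a man of $M^{+}$, not merely associated with one.
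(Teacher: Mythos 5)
Your proof is correct and follows essentially the same route as the paper: the paper derives this corollary from Theorem \ref{theorem structure of manipulative coalitions}, which in turn rests on Proposition \ref{proposition update in welfare}\ref{item structure men weakly worse}, proved exactly as you do --- by showing the manipulated matching is individually rational at the true profile and invoking Hwang's Blocking Lemma to produce a blocking pair. The only (cosmetic) difference is the endgame: the paper shows the woman supplied by the Blocking Lemma lies \emph{outside} $A'$ and contradicts stability of $D^M(\tilde{P}_{A'},P_{-A'})$ at the misreported profile, whereas you show she lies \emph{inside} $A'$ and contradict stability of $D^M(P_A)$ at the true profile; both use the same stability facts.
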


	Our next result concerns how a manipulation affects the MPDA rule from the agents' point of view. While every woman in the market weakly benefits from a successful misreporting, each man weakly suffers. Notice that Theorem \ref{theorem structure of manipulative coalitions} follows from this result.

	\begin{proposition}\label{proposition update in welfare}
		On an arbitrary domain $\mathcal{P}_A$, suppose the MPDA rule $D^M$ is manipulable at $P_A \in \mathcal{P}_A$ by coalition $A' \subseteq A$ via $\tilde{P}_{A'} \in \underset{a \in A'}{\prod}\mathcal{P}_{a}$. Then, 
		\begin{enumerate}[(a)]
			\item\label{item structure men weakly worse} $D^M_m(P_A) \mathrel{R_m} D^M_m(\tilde{P}_{A'},P_{-A'})$ for all $m \in M$, and
			
			\item\label{item structure women weakly better} $D^M_w(\tilde{P}_{A'},P_{-A'}) \mathrel{R_w} D^M_w(P_A)$ for all $w \in W$.
		\end{enumerate}
	\end{proposition}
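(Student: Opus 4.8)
The plan is to write $\mu := D^M(P_A)$ and $\nu := D^M(\tilde{P}_{A'},P_{-A'})$, and set $P'_A := (\tilde{P}_{A'},P_{-A'})$. Two elementary facts come first. Since the MPDA is a stable rule, $\mu$ is individually rational at $P_A$; and since every $a\in A'$ manipulates, $\nu_a \mathrel{P_a}\mu_a \mathrel{R_a}\emptyset$, so $\nu_a\neq\emptyset$ and $\nu_a$ is acceptable to $a$ under $P_a$, for each $a\in A'$. Combined with individual rationality of $\nu$ at $P'_A$ for the agents outside $A'$ (who report truthfully there), this gives that $\nu$ is individually rational at $P_A$. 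I also record the classical fact that $\mu$ is the men-optimal, hence women-pessimal, stable matching at $P_A$.

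For part~(a), suppose toward a contradiction that $M^+ := \{m\in M : \nu(m)\mathrel{P_m}\mu(m)\}$ is non-empty. I would apply the Blocking Lemma of Gale and Sotomayor to the individually rational matching $\nu$ at $P_A$, whose men-optimal stable matching is $\mu$: there is a pair $(m,w)$ that blocks $\nu$ at $P_A$, with $m\in M\setminus M^+$ and $\nu(w)\in M^+$; write $m':=\nu(w)$. Because a manipulating man strictly improves, $A'\cap M\subseteq M^+$, so $m\notin A'$ and $m$ reports truthfully at $P'_A$. If also $w\notin A'$, then $(m,w)$ blocks $\nu$ at $P'_A$, contradicting stability of the MPDA at $P'_A$. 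If instead $w\in A'$, then $\nu(w)\mathrel{P_w}\mu(w)$, i.e.\ $m'\mathrel{P_w}\mu(w)$; but $m'\in M^+$ means $\nu(m')=w\mathrel{P_{m'}}\mu(m')$, so stability of $\mu$ at $P_A$ forces $\mu(w)\mathrel{P_w}m'$ --- a contradiction. Hence $M^+=\emptyset$, which is part~(a); in particular $A'\cap M\subseteq M^+=\emptyset$, so $A'\subseteq W$, which is exactly Theorem~\ref{theorem structure of manipulative coalitions}.

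For part~(b), fix $w\in W$; I may assume $w\notin A'$ (for $w\in A'$ the claim is the definition of manipulation) and $\mu(w)\neq\emptyset$ (otherwise individual rationality of $\nu$ at $P_A$ gives $\nu(w)\mathrel{R_w}\emptyset=\mu(w)$). Put $m^*:=\mu(w)$. Since $A'\subseteq W$, $m^*$ reports truthfully at $P'_A$, and part~(a) gives $\mu(m^*)=w\mathrel{R_{m^*}}\nu(m^*)$. If $\nu(m^*)=w$ we are done, so assume $w\mathrel{P_{m^*}}\nu(m^*)$; since $w$ is acceptable to $m^*$ under $P_{m^*}$ (because $\mu$ is individually rational at $P_A$), $m^*$ proposes to $w$ at some step of the run of the MPDA at $P'_A$. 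At that step $w$'s tentative partner becomes weakly $P_w$-preferred to $m^*$, and since $w$ reports her true preference $P_w$, her tentative partner only improves thereafter; hence $\nu(w)\mathrel{R_w}m^*=\mu(w)$, as desired.

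The substantive moves are the invocation of the Blocking Lemma in part~(a) --- together with the short case analysis that closes the case in which the guaranteed blocking pair involves a coalition woman, which is the one place stability of $\mu$ must be used carefully --- and, in part~(b), the observation that part~(a) forces $\mu(w)$ to propose to $w$ during the deviated DA run, after which monotonicity of women's tentative partners finishes the argument. I do not anticipate an obstacle beyond correctly setting up these two ideas; the remaining verifications (individual rationality of $\nu$ at $P_A$, the acceptability bookkeeping, the men-optimal/women-pessimal duality, and monotonicity of tentative partners within a DA run) are standard.
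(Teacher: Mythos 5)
Your argument is correct and follows essentially the same route as the paper: establish individual rationality of the deviated outcome at the true profile, invoke the Blocking Lemma, and show the guaranteed blocking pair lies outside the coalition (the $m\notin A'$ step via $A'\cap M\subseteq M^+$ and the $w\notin A'$ step via stability of $\mu$ at $P_A$), so that it also blocks at the misreported profile. The only divergence is cosmetic and confined to part (b): the paper converts $w \mathrel{P_{m^*}} \nu(m^*)$ into a blocking pair of $\nu$ at the misreported profile, whereas you finish by noting that $m^*$ must propose to $w$ in the deviated DA run and that $w$'s tentative partner improves monotonically --- both closings are valid.
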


	Our last result in this subsection says that the set of unmatched agents does not get affected by manipulation.

	\begin{proposition}\label{proposition same set of matched agents}
		On an arbitrary domain $\mathcal{P}_A$, suppose the MPDA rule $D^M$ is manipulable at $P_A \in \mathcal{P}_A$ by coalition $A' \subseteq A$ via $\tilde{P}_{A'} \in \underset{a \in A'}{\prod}\mathcal{P}_{a}$. Then, for every $a \in A$,
		\begin{equation*}
			D^M_a(P_A) = \emptyset \iff D^M_a(\tilde{P}_{A'},P_{-A'}) = \emptyset.\footnotemark
		\end{equation*}
	\end{proposition}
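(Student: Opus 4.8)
The plan is to deduce the statement from Proposition~\ref{proposition update in welfare} together with the individual rationality built into stability, via a short counting argument. Throughout, write $\mu := D^M(P_A)$ and $\nu := D^M(\tilde{P}_{A'}, P_{-A'})$. By Theorem~\ref{theorem structure of manipulative coalitions} we have $A' \subseteq W$, so every man's preference coincides (and equals $P_m$) in the two profiles $P_A$ and $(\tilde{P}_{A'}, P_{-A'})$; this is the only place the structural result is used.

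First I would establish two one-directional inclusions between the matched sets. Inclusion~(i): every man matched under $\nu$ is matched under $\mu$. Fix $m \in M$ with $\nu(m) \neq \emptyset$. Since $m \notin A'$, the matching $\nu$ is individually rational at $(\tilde{P}_{A'}, P_{-A'})$ with respect to $m$'s true preference $P_m$, so $\nu(m) \mathrel{R_m} \emptyset$ and hence $\nu(m) \mathrel{P_m} \emptyset$; combining this with Proposition~\ref{proposition update in welfare}(a) gives $\mu(m) \mathrel{R_m} \nu(m) \mathrel{P_m} \emptyset$, so $\mu(m) \neq \emptyset$. Inclusion~(ii): every woman matched under $\mu$ is matched under $\nu$. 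Fix $w \in W$ with $\mu(w) \neq \emptyset$. Individual rationality of $\mu$ at $P_A$ gives $\mu(w) \mathrel{R_w} \emptyset$, hence $\mu(w) \mathrel{P_w} \emptyset$ (here $P_w$ is $w$'s true preference, so the argument does not care whether $w \in A'$); combining with Proposition~\ref{proposition update in welfare}(b) gives $\nu(w) \mathrel{R_w} \mu(w) \mathrel{P_w} \emptyset$, so $\nu(w) \neq \emptyset$.

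Then I would conclude by counting. Let $M^\mu, M^\nu \subseteq M$ and $W^\mu, W^\nu \subseteq W$ be the sets of matched men and matched women under $\mu$ and $\nu$. By condition~(\ref{item matching definition 3}) in the definition of a matching, $\mu$ restricts to a bijection between $M^\mu$ and $W^\mu$, so $|M^\mu| = |W^\mu|$, and similarly $|M^\nu| = |W^\nu|$. Inclusions~(i) and~(ii) read $M^\nu \subseteq M^\mu$ and $W^\mu \subseteq W^\nu$, so $|M^\nu| \leq |M^\mu| = |W^\mu| \leq |W^\nu| = |M^\nu|$, which forces every inequality to be an equality; since these sets are finite, $M^\nu = M^\mu$ and $W^\mu = W^\nu$. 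That is exactly $D^M_a(P_A) = \emptyset \iff D^M_a(\tilde{P}_{A'},P_{-A'}) = \emptyset$ for every $a \in A$.

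I do not expect a genuine obstacle here — the argument is essentially bookkeeping. The two points needing care are getting the directions of the two inequalities of Proposition~\ref{proposition update in welfare} right, and invoking Theorem~\ref{theorem structure of manipulative coalitions} so that individual rationality of $\nu$ may be read against each man's \emph{true} preference: without $A' \subseteq W$, a man inside the coalition could a priori be assigned a woman he declared unacceptable, and inclusion~(i) would break.
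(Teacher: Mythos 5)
Your proposal is correct and follows essentially the same route as the paper: both derive the inclusion ``matched under $D^M(\tilde{P}_{A'},P_{-A'})$ $\Rightarrow$ matched under $D^M(P_A)$'' for men from Theorem~\ref{theorem structure of manipulative coalitions}, individual rationality, and Proposition~\ref{proposition update in welfare}(a), the reverse inclusion for women from individual rationality and Proposition~\ref{proposition update in welfare}(b), and then close the argument by equating the cardinalities of matched men and matched women in each matching. No substantive difference.
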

	\footnotetext{Throughout, ``$A \implies B$'' means that ``$A$ implies $B$'', and ``$A \iff B$'' means that ``$A$ if and only if $B$''.}

	A key implication of Proposition \ref{proposition same set of matched agents} is that an unmatched agent cannot be a part of a manipulation. Note that by symmetry, Proposition \ref{proposition same set of matched agents} also holds for the WPDA rule.
	
	Proposition \ref{proposition same set of matched agents} cannot be deduced from a result of \citet{mcvitie1970stable}, where they show that the set of unmatched agents remains the same across the stable matchings at a preference profile. To see this, note that in Proposition \ref{proposition same set of matched agents}, the matching $D^M(P_A)$ is not stable at $(\tilde{P}_{A'},P_{-A'})$, and the matching $D^M(\tilde{P}_{A'},P_{-A'})$ is not stable at $P_A$ in general.

	\subsection{Equivalence between strategy-proofness and group strategy-proofness}

	\citet{alcalde1994top} identify a restriction on the domain, called \textit{top dominance}, and show that top dominance for women is a sufficient condition for the MPDA rule to be strategy-proof. Our next result, which extends their result, says it is also sufficient for the MPDA rule to be group strategy-proof. Before stating this result, we first present the notion of top dominance.

	\begin{definition}[Top dominance]
		A domain of preference profiles $\mathcal{P}_A$ satisfies \textit{\textbf{top dominance} for women} if for every $w \in W$, $\mathcal{P}_w$ satisfies the following property: for every $x \in M$ and every $y, z \in M \cup \{\emptyset\}$, if there exists a preference $P \in \mathcal{P}_w$ with $x \mathrel{P} y \mathrel{P} z$ and $y \mathrel{R} \emptyset$, then there is no preference $\tilde{P} \in \mathcal{P}_w$ such that $x \mathrel{\tilde{P}} z \mathrel{\tilde{P}} y$ and $z \mathrel{\tilde{R}} \emptyset$.
	\end{definition}

	We define \textit{\textbf{top dominance} for men} in an analogous manner.

	\begin{proposition}\label{proposition wgsp existence}
		Let $\mathcal{P}_A$ be an arbitrary domain of preference profiles. If $\mathcal{P}_A$ satisfies top dominance for women, then the MPDA rule is stable and group strategy-proof on $\mathcal{P}_A$.
	\end{proposition}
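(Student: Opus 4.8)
The plan has two parts. Stability is immediate: for any $P_A\in\mathcal{P}_A$, running the men‑proposing deferred acceptance algorithm at $P_A$ produces, by the classical result recorded in Remark~\ref{remark gale result}, a matching that is individually rational and has no blocking pair at $P_A$; this concerns the single profile $P_A$ and so holds no matter which subdomain $\mathcal{P}_A$ is. Hence $D^M$ is stable on $\mathcal{P}_A$, and the remaining — substantive — task is group strategy‑proofness.

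Suppose, toward a contradiction, that $D^M$ is manipulable at some $P_A\in\mathcal{P}_A$ by a coalition $A'$ via $\tilde P_{A'}$. By Theorem~\ref{theorem structure of manipulative coalitions}, $A'\subseteq W$. Since top dominance for women makes $D^M$ strategy‑proof \citep{alcalde1994top}, the coalition $A'$ cannot be a singleton, so $|A'|\geq 2$. Put $\mu=D^M(P_A)$ and $\nu=D^M(\tilde P_{A'},P_{-A'})$. From the definition of manipulation, $\nu(w)\mathrel{P_w}\mu(w)$ for every $w\in A'$; from Proposition~\ref{proposition update in welfare}, $\nu(w)\mathrel{R_w}\mu(w)$ for every $w\in W$ and $\mu(m)\mathrel{R_m}\nu(m)$ for every $m\in M$; and, since men never deviate, $\nu$ is individually rational at $(\tilde P_{A'},P_{-A'})$, which together with these welfare comparisons and the individual rationality of $\mu$ at $P_A$ shows that $\nu$ is individually rational at $P_A$ as well. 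Also, $\nu$ is stable at $(\tilde P_{A'},P_{-A'})$.

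Set $W'=\{w\in W:\nu(w)\mathrel{P_w}\mu(w)\}$, so $\emptyset\neq A'\subseteq W'$, and split on whether $W'=W$. If $W'\neq W$, then — $\nu$ being individually rational at $P_A$ and strictly preferred by some but not all women to the man‑optimal stable matching $\mu$ of $P_A$ — a blocking‑lemma argument (using that $\mu$ is the man‑optimal stable matching at $P_A$) produces a man $m^{\ast}$ and a woman $w^{\ast}\in W\setminus W'$ for which $(m^{\ast},w^{\ast})$ blocks $\nu$ at $P_A$. But $w^{\ast}\notin W'\supseteq A'$, so $w^{\ast}$'s preference in $(\tilde P_{A'},P_{-A'})$ is the truthful $P_{w^{\ast}}$, and no man deviates either; hence $(m^{\ast},w^{\ast})$ still blocks $\nu$ at $(\tilde P_{A'},P_{-A'})$, contradicting the stability of $\nu$ there. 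This case uses neither top dominance nor anything about $A'$ beyond $A'\subseteq W$.

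The remaining case $W'=W$ — every woman is strictly better off at $\nu$ than at $\mu$ — is where top dominance is essential, and I expect it to be the main obstacle. Here Proposition~\ref{proposition same set of matched agents} forces $\mu(w),\nu(w)\in M$ for every $w$, so both matchings match all women. Using this, together with the fact that $\nu$ is stable at $(\tilde P_{A'},P_{-A'})$ while every woman strictly prefers $\nu$ to the man‑optimal stable matching $\mu$ of the truthful profile, one has to extract a woman $w\in A'$ and outcomes $x\in M$, $y,z\in M\cup\{\emptyset\}$ such that $x\mathrel{P_w}y\mathrel{P_w}z$ with $y\mathrel{R_w}\emptyset$, while $x\mathrel{\tilde P_w}z\mathrel{\tilde P_w}y$ with $z\mathrel{\tilde R_w}\emptyset$ — a direct violation of top dominance for women, since both $P_w$ and $\tilde P_w$ are admissible. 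The algorithmic content behind this is that the coalition's misreports must have caused some man whom $w$ truly ranks above $\mu(w)$ to propose to her in the run at $(\tilde P_{A'},P_{-A'})$ (no such man proposes to her in the truthful run), and the delicate step is to convert this fact into the required reversal of the relative order of two of $w$'s acceptable men sitting underneath a common superior man. Pinning down the correct woman and the correct triple — i.e. matching the Gale–Shapley proposal/rejection bookkeeping to the top‑dominance configuration — is the crux of the proof; everything preceding it (reduction to a coalition of women of size at least two, the welfare and matched‑set structure of $\nu$, and the $W'\neq W$ case) is routine given the results already in hand.
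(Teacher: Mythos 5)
Your proposal takes a direct route, whereas the paper proves this proposition by a short reduction: enlarge the domain to $\tilde{\mathcal{P}}_A$ with $\tilde{\mathcal{P}}_m = \mathbb{L}(W\cup\{\emptyset\})$ for every man (which preserves top dominance for women and creates unrestricted top pairs for men), invoke the Alcalde--Barber\`a corollary to get strategy-proofness of $D^M$ there, upgrade to group strategy-proofness via Theorem \ref{theorem sp wgsp equivalence}, and restrict back to $\mathcal{P}_A$. Your direct attempt has two genuine gaps. First, the $W'\neq W$ case is wrong: the Blocking Lemma applies to \emph{men} who are strictly better off than under the man-optimal stable matching $\mu=D^M(P_A)$, and by Proposition \ref{proposition update in welfare}.\ref{item structure men weakly worse} that set is empty in your setting; there is no dual statement for women who improve upon the \emph{woman-pessimal} stable matching, and the blocking pair you assert need not exist. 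Concretely, take two disjoint $2\times 2$ blocks with cyclic preferences ($m_1: w_1 w_2$, $m_2: w_2 w_1$, $w_1: m_2 m_1$, $w_2: m_1 m_2$, and a copy on $\{m_4,m_5,w_4,w_5\}$) plus an isolated pair $(m_3,w_3)$ who top-rank each other; the coalition $\{w_1,w_4\}$ manipulates by truncation, $W'=\{w_1,w_2,w_4,w_5\}\neq W$, yet the resulting matching is itself stable at $P_A$, so no blocking pair of any kind exists. Your remark that this case ``uses neither top dominance nor anything about $A'$'' should itself have been a red flag, since coalition manipulations with $W'\neq W$ plainly occur on domains violating top dominance.

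Second, the case you correctly identify as the crux is only described, not proved: you say one ``has to extract'' a woman $w$ and a triple $x,y,z$ witnessing a violation of top dominance and call pinning it down ``the delicate step,'' but that step is the entire content of the proposition and is left open. For comparison, the corresponding work in the paper is done inside the proof of Theorem \ref{theorem sp wgsp equivalence}: one follows the first step at which some man is rejected by his $D^M(P_A)$-partner, builds an alternating sequence $m^1,w^1,\ldots,m^k,w^k$ of distinct agents using Propositions \ref{proposition update in welfare} and \ref{proposition same set of matched agents}, and feeds the resulting preference configuration into Lemma \ref{lemma sufficient for incompatibility}. As written, your proposal establishes stability and some easy reductions but not group strategy-proofness.
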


	As we can see, for the MPDA rule, strategy-proofness coincides with group strategy-proofness when the domain satisfies top dominance for women. This coincidence is not an implication of top dominance, but rather a property of any stable matching rule. This is what we show next. We first introduce a richness condition, called \textit{unrestricted top pairs} \citep{alva2017manipulation}, of the domain that features in our result.

	\begin{definition}[Unrestricted top pairs]\label{definition unrestricted top pairs}
		A domain of preference profiles $\mathcal{P}_A$ satisfies \textit{\textbf{unrestricted top pairs} for men} if for every $m \in M$,
		\begin{enumerate}[(i)]
			\item for every $w, w' \in W$, there exists $P \in \mathcal{P}_m$ such that $w \mathrel{P} w' \mathrel{P} z$ for all $z \in (W \cup \{\emptyset\}) \setminus \{w,w'\}$,
			
			\item\label{item unrestricted top} for every $w \in W$, there exists $\tilde{P} \in \mathcal{P}_m$ such that $w \mathrel{\tilde{P}} \emptyset \mathrel{\tilde{P}} z$ for all $z \in W \setminus \{w\}$, and
			
			\item there exists $P' \in \mathcal{P}_m$ such that $\tau(P') = \emptyset$.
		\end{enumerate}
	\end{definition}

	For example, whenever the sets of admissible preferences for men are unrestricted, the corresponding domain satisfies unrestricted top pairs for men. We define \textit{\textbf{unrestricted top pairs} for women} in an analogous manner.
	
	We now present our main result of this paper. It shows the equivalence between strategy-proofness and group strategy-proofness for any stable matching rule under our richness condition.

	\begin{theorem}\label{theorem sp wgsp equivalence}
		Let $\mathcal{P}_A$ satisfy unrestricted top pairs for at least one side of the market. Then, any stable matching rule on $\mathcal{P}_A$ is strategy-proof if and only if it is group strategy-proof.
	\end{theorem}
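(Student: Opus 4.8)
The \emph{only if} direction is immediate, since a singleton is a coalition. For the \emph{if} direction, suppose $\varphi$ is stable and strategy-proof on $\mathcal{P}_A$; we must show it is group strategy-proof. As the two sides play symmetric roles in the model, we may take the side for which $\mathcal{P}_A$ satisfies unrestricted top pairs to be the women's side. By Lemma \ref{lemma only if}, $\varphi$ is then the women-proposing deferred acceptance rule, which we write $D^W$, so it suffices to show $D^W$ is group strategy-proof on $\mathcal{P}_A$. Suppose not. By Theorem \ref{theorem structure of manipulative coalitions} applied with the roles of men and women interchanged, every coalition manipulating $D^W$ consists of men; fix a profile $P_A$, a coalition $M' \subseteq M$, and a report $\tilde{P}_{M'}$ with $D^W_m(\tilde{P}_{M'}, P_{-M'}) \mathrel{P_m} D^W_m(P_A)$ for every $m \in M'$, chosen with $|M'|$ minimal over all such triples; strategy-proofness forces $|M'| \geq 2$.

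Set $\mu := D^W(P_A)$ and $\mu' := D^W(\tilde{P}_{M'}, P_{-M'})$. The mirror images of Propositions \ref{proposition update in welfare} and \ref{proposition same set of matched agents} give $\mu'(m) \mathrel{R_m} \mu(m)$ for all $m \in M$ (strictly for $m \in M'$), $\mu(w) \mathrel{R_w} \mu'(w)$ for all $w \in W$, and that $\mu$ and $\mu'$ leave the same agents unmatched; hence $\mu'(m) \in W$ for each $m \in M'$ and $\mu'$ is individually rational at $P_A$. The heart of the argument is to convert this coalitional manipulation by the receiving side into a profitable unilateral deviation by a single man. The plan is to use unrestricted top pairs for women to replace the women's reports by a profile $\hat{P}_W$ in which each woman $w$ ranks $\mu(w)$ first and $\mu'(w)$ second --- these coincide only when both equal $\emptyset$, by Proposition \ref{proposition same set of matched agents}; insert $\emptyset$ just below them otherwise --- and then to trace $D^W$ along the sequence of profiles obtained from $(P_M, \hat{P}_W)$ by handing the men of $M'$ their components $\tilde{P}_m$ one at a time. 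The welfare comparisons above make $\mu$ stable at $(P_M, \hat{P}_W)$ and $\mu'$ stable at the fully manipulated profile $(\tilde{P}_{M'}, P_{M \setminus M'}, \hat{P}_W)$; combining this with the result of \citet{mcvitie1970stable} that every stable matching matches the same set of agents, and with the fact that each woman's first choice under $\hat{P}_W$ is her $\mu$-match, yields $D^W(P_M, \hat{P}_W) = \mu$ and shows that at $(\tilde{P}_{M'}, P_{M \setminus M'}, \hat{P}_W)$ each man is matched to $\mu(m)$ or to $\mu'(m)$. Using the minimality of $M'$ one then shows that some $m^* \in M'$ is matched to $\mu'(m^*)$ at this profile, so, walking back along the sequence, there is a single man who strictly gains by reporting his manipulated preference rather than the truthful one at the intermediate profile where his match first improves --- contradicting strategy-proofness of $D^W$. (A second, essentially equivalent route is to use strategy-proofness of $D^W$ together with unrestricted top pairs for women to force $\mathcal{P}_A$ to satisfy top dominance for men, by adapting the necessity argument of \citet{alcalde1994top}, and then to invoke the men--women mirror of Proposition \ref{proposition wgsp existence}.)

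The step I expect to be the main obstacle is controlling $D^W$ at the auxiliary profiles, and it splits into two intertwined parts: the women's reports $\hat{P}_W$ must be rigid enough to pull the outcome all the way back to $\mu$ when the men of $M'$ report truthfully, yet loose enough that enough of the coalitional gain survives at the fully manipulated profile --- concretely, that some man of $M'$ is still matched to his $\mu'$-partner there rather than the manipulation collapsing entirely to $\mu$. Pinning down this balance is where stability, the structure of the set of stable matchings (in particular the common-unmatched-set property of \citet{mcvitie1970stable}), the welfare monotonicity of Propositions \ref{proposition update in welfare} and \ref{proposition same set of matched agents}, and the minimality of $M'$ must all be combined; unrestricted top pairs for women is exactly what guarantees that the reports $\hat{P}_W$ are admissible.
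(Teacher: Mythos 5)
Your overall architecture matches the paper's only at the top level: reduce to the proposing-side DA rule via Lemma \ref{lemma only if}, invoke Theorem \ref{theorem structure of manipulative coalitions} and Propositions \ref{proposition update in welfare}--\ref{proposition same set of matched agents} for the welfare and unmatched-set structure, and then use unrestricted top pairs on the proposing side to build auxiliary proposing-side preferences of the form ``$\mu$-partner first, $\mu'$-partner second'' that pin the DA outcome down. (Two repairable slips along the way: Proposition \ref{proposition same set of matched agents} does not say that $\mu(w)$ and $\mu'(w)$ ``coincide only when both equal $\emptyset$'' --- a woman uninvolved in the manipulation can keep the same partner --- and condition (i) of Definition \ref{definition unrestricted top pairs} fixes only the top two positions, so you cannot ``insert $\emptyset$ just below them.'')

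The genuine gap is the step you yourself flag as the main obstacle: ``using the minimality of $M'$ one then shows that some $m^*\in M'$ is matched to $\mu'(m^*)$,'' followed by ``walking back along the sequence.'' Handing the members of $M'$ their reports one at a time and looking for the profile at which some member's match first improves is the classic attempt to reduce a group manipulation to an individual one, and it fails in general: the improvement for $m^*$ may occur at the step at which a \emph{different} man switches, which contradicts nothing. Minimality of $|M'|$ does not rescue this, because along your auxiliary sequence the intermediate outcomes need not constitute manipulations by any sub-coalition (not all switchers need strictly gain, and gainers may include men who have not yet switched), so no smaller manipulating triple is produced. Nor is it established that $\nu := D^W(\tilde{P}_{M'},P_{M\setminus M'},\hat{P}_W)$ differs from $\mu$ at all. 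The paper closes exactly this hole with a device absent from your sketch: it runs the DA algorithm at the manipulated profile, locates the \emph{first step} $s^*$ at which a proposer is rejected by his $\mu$-partner, extracts from that rejection an alternating chain $m^1,w^1,\dots,m^k,w^k$, and feeds the resulting preference pattern into Lemma \ref{lemma sufficient for incompatibility}, which exhibits a \emph{single} receiving-side agent ($w^1$) manipulating unilaterally between two preferences already in her admissible set. Without some substitute for that first-rejection/chain analysis, your contradiction with strategy-proofness is asserted rather than derived. The parenthetical ``second route'' is also not available: necessity of top dominance for strategy-proofness holds only under additional domain conditions (Theorem 4 of \citet{alcalde1994top}, or the cyclical-inclusion and anonymity hypotheses of Theorem \ref{theorem single-peaked}), none of which are assumed in Theorem \ref{theorem sp wgsp equivalence}.
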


	Note that the domain satisfying unrestricted top pairs for at least one side of the market is a sufficient condition for the equivalence between strategy-proofness and group strategy-proofness under stability, not for the existence of a stable and (group) strategy-proof matching rule. In fact, if the domain satisfies unrestricted top pairs for both sides, no stable matching rule is strategy-proof (see Example \ref{example both side unrestricted no sp rule}), and therefore, Theorem \ref{theorem sp wgsp equivalence} is vacuously satisfied.

	\begin{example}\label{example both side unrestricted no sp rule}
		Suppose the domain $\mathcal{P}_A$ satisfies unrestricted top pairs for both sides. Consider the preference profiles presented in Table \ref{preference profiles unrestricted for both}. For instance, $w_1 w_2 \ldots$ denotes a preference that ranks $w_1$ first and $w_2$ second (the dots indicate that all preferences for the corresponding parts are irrelevant and can be chosen arbitrarily). Here, $m_k$ denotes a man other than $m_1, m_2$ (if any), and $w_l$ denotes a woman other than $w_1, w_2$ (if any). Note that only man $m_1$ changes his preference from $P^1_A$ to $P^2_A$ and only woman $w_1$ changes her preference from $P^1_A$ to $P^3_A$.
		\begin{table}[H]
			\centering
			\begin{tabular}{@{}c|ccc|ccc@{}}
				\hline
				Preference profiles & $m_1$ & $m_2$ & $m_k$ & $w_1$ & $w_2$ & $w_l$ \\ \hline
				\hline
				$P^1_A$ & $w_1 w_2 \ldots$ & $w_2 w_1 \ldots$ & $\emptyset \ldots$ & $m_2 m_1 \ldots$ & $m_1 m_2 \ldots$ & $\emptyset \ldots$ \\
				$P^2_A$ & $w_1 \emptyset \ldots$ & $w_2 w_1 \ldots$ & $\emptyset \ldots$ & $m_2 m_1 \ldots$ & $m_1 m_2 \ldots$ & $\emptyset \ldots$ \\
				$P^3_A$ & $w_1 w_2 \ldots$ & $w_2 w_1 \ldots$ & $\emptyset \ldots$ & $m_2 \emptyset \ldots$ & $m_1 m_2 \ldots$ & $\emptyset \ldots$ \\
				\hline
			\end{tabular}
			\caption{Preference profiles for Example \ref{example both side unrestricted no sp rule}}
			\label{preference profiles unrestricted for both}
		\end{table}
		
		For ease of presentation, let $\mu$ denote the matching $\big[(m_1,w_1), (m_2,w_2), (a,\emptyset) \hspace{2 mm} \forall \hspace{2 mm} a \in A \setminus \{m_1, m_2, w_1, w_2\}\big]$ and $\tilde{\mu}$ the matching $\big[(m_1,w_2), (m_2,w_1), (a,\emptyset) \hspace{2 mm} \forall \hspace{2 mm} a \in A \setminus \{m_1, m_2, w_1, w_2\}\big]$ in this example. 
		The sets of stable matchings at $P^1_A$, $P^2_A$, and $P^1_A$ are $\{\mu, \tilde{\mu}\}$, $\{\mu\}$, and $\{\tilde{\mu}\}$, respectively. 
		
		Fix a stable matching rule $\varphi$ on $\mathcal{P}_A$. If $\varphi(P^1_A) = \mu$, then $w_1$ can manipulate at $P^1_A$ via $P^3_{w_1}$. If $\varphi(P^1_A) = \tilde{\mu}$, then $m_1$ can manipulate at $P^1_A$ via $P^2_{m_1}$. This implies $\varphi$ is not strategy-proof on $\mathcal{P}_A$.
		\hfill
		$\Diamond$
	\end{example}

	\section{Application: Results with single-peaked preferences}\label{section single-peaked}

	So far, all the results assume that the agents can have strict but otherwise arbitrary preferences. However, in many circumstances, preferences may well be restricted. A natural restriction is as follows: the agents on each side are ordered based on a certain criterion, say age, and the preferences respect these orders in the sense that as one moves away from his/her most preferred choice, his/her preference declines. Such restriction is known as \textit{single-peakedness} \citep{black1948rationale} in the literature. This section explores the compatibility between stability and (group) strategy-proofness with single-peaked preferences. 
	
	Let $\prec_M$ be a prior ordering over $M$ and $\prec_W$ a prior ordering over $W$. 
	A preference $P \in \mathbb{L}(W \cup \{\emptyset\})$ is \textit{\textbf{single-peaked} with respect to $\prec_W$} if for every $w, w' \in W$, $\big[\tau(P, W) \preceq_W w \prec_W w' \mbox{ or } w' \prec_W w \preceq_W \tau(P, W)\big]$ implies $w \mathrel{P} w'$.\footnote{$\preceq_W$ denotes the weak part of $\prec_W$.} Similarly, a preference $P \in \mathbb{L}(M \cup \{\emptyset\})$ is \textit{\textbf{single-peaked} with respect to $\prec_M$} if for every $m, m' \in M$, $\big[\tau(P, M) \preceq_M m \prec_M m' \mbox{ or } m' \prec_M m \preceq_M \tau(P, M)\big]$ implies $m \mathrel{P} m'$. 
	Let $\mathbb{S}(\prec_M)$ and $\mathbb{S}(\prec_W)$ denote the sets of all single-peaked preferences with respect to $\prec_M$ and $\prec_W$, respectively. Throughout this section, whenever we write $\mathcal{P}_A$ is a single-peaked domain, we mean $\mathcal{P}_m \subseteq \mathbb{S}(\prec_W)$ for all $m \in M$ and $\mathcal{P}_w \subseteq \mathbb{S}(\prec_M)$ for all $w \in W$.
	
	Our first result in this section is an impossibility result, which shows the incompatibility between stability and strategy-proofness on the ``maximal'' single-peaked domain. To see this, notice that whenever there are only two men and two women in the market, the maximal single-peaked domain is equivalent to the unrestricted domain. Because of this, and since stability and strategy-proofness are incompatible on the unrestricted domain (see \citet{roth1982economics}), our result follows (using the logic of embedding the smaller market to a larger market in a trivial manner).

	\begin{proposition}\label{proposition no stable and strategy-proof on full single-peaked}
		On the maximal single-peaked domain $\mathbb{S}^p(\prec_W) \times \mathbb{S}^q(\prec_M)$, no stable matching rule is strategy-proof.
	\end{proposition}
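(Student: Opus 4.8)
The plan is to embed the construction of Example~\ref{example both side unrestricted no sp rule} into the maximal single-peaked domain; the only thing to check is that the profiles displayed in Table~\ref{preference profiles unrestricted for both} can be realised by single-peaked preferences. Note first that when the market has only two men and two women, every strict preference over the three-element set on either side is vacuously single-peaked (single-peakedness restricts comparisons only among the ``real'' alternatives, of which there are then just two), so in that case the maximal single-peaked domain \emph{is} the unrestricted domain and the claim is immediate from Example~\ref{example both side unrestricted no sp rule}. For arbitrary $p,q\ge 2$ it therefore suffices to exhibit three profiles $P^1_A,P^2_A,P^3_A$ lying in $\mathbb{S}^p(\prec_W)\times\mathbb{S}^q(\prec_M)$ that have the shape shown in Table~\ref{preference profiles unrestricted for both}; the argument in Example~\ref{example both side unrestricted no sp rule} then applies verbatim.

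After relabelling the agents we may assume that $m_1,m_2$ are the two $\prec_M$-smallest men and $w_1,w_2$ the two $\prec_W$-smallest women; in particular $w_1,w_2$ are $\prec_W$-adjacent and $m_1,m_2$ are $\prec_M$-adjacent, which is exactly what allows a single-peaked preference to rank one of them first and the other second. Now complete each entry of Table~\ref{preference profiles unrestricted for both} so that the restriction of every agent's preference to the opposite side runs through $W$ (resp.\ $M$) in the order induced by $\prec_W$ (resp.\ $\prec_M$) starting from the indicated peak, with $\emptyset$ slotted into the displayed position; for instance take $P^1_{m_1}: w_1\,w_2\,w_3\cdots w_q\,\emptyset$, $P^1_{m_2}: w_2\,w_1\,w_3\cdots w_q\,\emptyset$, $P^2_{m_1}: w_1\,\emptyset\,w_2\cdots w_q$, $P^1_{m_k}: \emptyset\,w_1\,w_2\cdots w_q$ for $k\ge 3$, and symmetrically for the women. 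Each such preference is single-peaked, since the outside option is unconstrained and, by construction, the induced order on the opposite side is monotone away from its peak along $\prec_W$ (resp.\ $\prec_M$). Moreover every agent outside $\{m_1,m_2,w_1,w_2\}$ ranks $\emptyset$ first in each of $P^1_A,P^2_A,P^3_A$, hence is unmatched in every individually rational matching; consequently the set of stable matchings is exactly the one computed in Example~\ref{example both side unrestricted no sp rule}, namely $\{\mu,\tilde\mu\}$ at $P^1_A$, $\{\mu\}$ at $P^2_A$, and $\{\tilde\mu\}$ at $P^3_A$, where $\mu$ matches $m_1$ with $w_1$ and $m_2$ with $w_2$ (all other agents unmatched) and $\tilde\mu$ matches $m_1$ with $w_2$ and $m_2$ with $w_1$.

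Finally, let $\varphi$ be a stable matching rule on $\mathbb{S}^p(\prec_W)\times\mathbb{S}^q(\prec_M)$, so $\varphi(P^1_A)\in\{\mu,\tilde\mu\}$. If $\varphi(P^1_A)=\mu$, then $w_1$ — who prefers $m_2$ to $m_1$ under $P^1_{w_1}$ — strictly gains by reporting $P^3_{w_1}$, since $(P^3_{w_1},P^1_{-w_1})=P^3_A$ and $\varphi(P^3_A)=\tilde\mu$; if $\varphi(P^1_A)=\tilde\mu$, then $m_1$ — who prefers $w_1$ to $w_2$ under $P^1_{m_1}$ — strictly gains by reporting $P^2_{m_1}$, since $(P^2_{m_1},P^1_{-m_1})=P^2_A$ and $\varphi(P^2_A)=\mu$. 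Both deviations are admissible because the domain is a product of single-peaked preference sets, so $\varphi$ is not strategy-proof. The only genuine work is the routine verification that the displayed completions are single-peaked, together with the observation that one must pick $\prec_W$-adjacent women (and $\prec_M$-adjacent men) for the ``first/second'' pattern of Table~\ref{preference profiles unrestricted for both} to be achievable by single-peaked preferences — which is why we take the two $\prec_W$-minimal (resp.\ $\prec_M$-minimal) agents.
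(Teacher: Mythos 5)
Your proposal is correct and follows essentially the same route as the paper, which justifies Proposition \ref{proposition no stable and strategy-proof on full single-peaked} by observing that the maximal single-peaked domain over two agents per side coincides with the unrestricted domain and then embedding the two-by-two impossibility (Example \ref{example both side unrestricted no sp rule}) into the larger market. Your write-up merely makes explicit the one detail the paper leaves implicit --- that the two men and two women used in the embedding must be chosen $\prec_M$- and $\prec_W$-adjacent so that the ``first/second'' patterns of Table \ref{preference profiles unrestricted for both} are single-peaked realizable, with the outside option placed freely since the paper's definition constrains only the ranking of the real alternatives.
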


	We next focus on identifying all single-peaked domains with at least one stable and (group) strategy-proof matching rule. We do so under two domain conditions; namely, \textit{cyclical inclusion} -- a richness condition, and \textit{anonymity}. First, we introduce these conditions. We use the following convention to do so: for a finite set $X$ and a set of preferences $\mathcal{D} \subseteq \mathbb{L}(X)$, whenever we write $(\cdot x \cdot y \cdot z \cdot) \in \mathcal{D}$, we mean that there exists $P \in \mathcal{D}$ such that $x \mathrel{P} y \mathrel{P} z$, where $x, y, z \in X$. Here, $x$ and $y$ need not be consecutively ranked in $P$, and neither are $y$ and $z$. Furthermore, $x$ need not be the most preferred element according to $P$ and $z$ need not be the least preferred element.

	\begin{definition}[Cyclical inclusion]\label{definition cyclical inclusion}
		A domain of preference profiles $\mathcal{P}_A$ satisfies \textit{\textbf{cyclical inclusion} for men} if for every $m \in M$,
		\begin{enumerate}[(i)]
			\item there exists $P \in \mathcal{P}_m$ such that $\tau(P) = \emptyset$, and
			
			\item for every $w, w' \in W$, $(\cdot w \cdot w' \cdot \emptyset \cdot) \in \mathcal{P}_m$ implies $(\cdot w' \cdot w \cdot \emptyset \cdot) \in \mathcal{P}_m$.
		\end{enumerate}
	\end{definition}

	In other words, cyclical inclusion for men requires that every man has an admissible preference with no acceptable women. It also requires that for any two women, say $w$ and $w'$, whenever a man, say $m$, has an admissible preference $P_m$ such that $w \mathrel{P_m} w' \mathrel{P_m} \emptyset$, then $m$ has another admissible preference $\tilde{P}_m$ with $w' \mathrel{\tilde{P}_m} w \mathrel{\tilde{P}_m} \emptyset$. We define \textit{\textbf{cyclical inclusion} for women} in an analogous manner.

	\begin{definition}[Anonymity]\label{definition anonimity}
		A domain of preference profiles $\mathcal{P}_A$ is \textit{\textbf{anonymous}} if
		\begin{enumerate}[(i)]
			\item $\mathcal{P}_m = \mathcal{P}_{m'}$ for all $m, m' \in M$, and
			
			\item $\mathcal{P}_w = \mathcal{P}_{w'}$ for every $w, w' \in W$.
		\end{enumerate} 
	\end{definition}

	In other words, anonymity requires every man to have the same set of admissible preferences, and so do the women.
	
	Our final result characterizes all single-peaked domains compatible for stability and (group) strategy-proofness under cyclical inclusion and anonymity. It further says that whenever such a matching rule exists, at least one of the DA rules is stable and group strategy-proof.

	\begin{theorem}\label{theorem single-peaked}
		Let $\mathcal{P}_A$ be an anonymous single-peaked domain that satisfies cyclical inclusion for both sides. The following statements are equivalent:
		\begin{enumerate}[(a)]
			\item\label{item TD} $\mathcal{P}_A$ satisfies top dominance for at least one side of the market.
			
			\item\label{item stable and sp} There exists a stable and strategy-proof matching rule on $\mathcal{P}_A$.
			
			\item\label{item stable and gsp} There exists a stable and group strategy-proof matching rule on $\mathcal{P}_A$.
			
			\item\label{item DA is gsp} At least one of the DA rules satisfies stability and strategy-proofness.
		\end{enumerate}
	\end{theorem}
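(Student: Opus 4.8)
The strategy is to prove $(a) \Rightarrow (c)$, $(a) \Rightarrow (d)$, $(c) \Rightarrow (b)$, $(d) \Rightarrow (b)$, and $(b) \Rightarrow (a)$, which together give the equivalence of all four statements. The implications $(a) \Rightarrow (c)$ and $(a) \Rightarrow (d)$ are immediate from Proposition \ref{proposition wgsp existence}: if $\mathcal{P}_A$ satisfies top dominance for women, then the MPDA rule $D^M$ is stable and group strategy-proof on $\mathcal{P}_A$, hence in particular stable and strategy-proof, so both $(c)$ and $(d)$ hold; the case of top dominance for men is symmetric, using the WPDA rule. The implications $(c) \Rightarrow (b)$ and $(d) \Rightarrow (b)$ are trivial, since a singleton is an admissible coalition (so group strategy-proofness implies strategy-proofness) and a DA rule is a matching rule. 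Hence everything reduces to $(b) \Rightarrow (a)$: from $(b) \Rightarrow (a)$ one gets $(b) \Rightarrow (a) \Rightarrow (c)$ and $(b) \Rightarrow (a) \Rightarrow (d)$, closing the loop.

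I would prove $(b) \Rightarrow (a)$ by contraposition: assuming $\mathcal{P}_A$ violates top dominance \emph{both} for men and for women, I would show no stable matching rule on $\mathcal{P}_A$ is strategy-proof, by reducing to the configuration of Example \ref{example both side unrestricted no sp rule}. Concretely, I would argue that the two top-dominance failures, together with anonymity, single-peakedness, and cyclical inclusion, force $\mathcal{P}_A$ to contain, for a suitable choice of two men $m_1, m_2$ and two women $w_1, w_2$, the three preference profiles displayed in Table \ref{preference profiles unrestricted for both} (with these $m_1, m_2, w_1, w_2$): the women's preferences ``$m_2 m_1 \ldots$'', ``$m_2 \emptyset \ldots$'', ``$m_1 m_2 \ldots$'', the corresponding three men's preferences, and ``$\emptyset \ldots$'' for every remaining agent — the last being available by clause (i) of cyclical inclusion, which is exactly what isolates the four-agent submarket. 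Once these profiles lie in the domain, the reasoning of Example \ref{example both side unrestricted no sp rule} applies verbatim: at $P^1_A$ the set of stable matchings is $\{\mu, \tilde\mu\}$, a stable rule must select one of the two, and whichever it selects, either $w_1$ (via the truncation $m_2 \emptyset \ldots$, which makes $\tilde\mu$ the unique stable matching) or $m_1$ (via the truncation $w_1 \emptyset \ldots$, which makes $\mu$ unique) manipulates profitably. This also subsumes Proposition \ref{proposition no stable and strategy-proof on full single-peaked}, since the maximal single-peaked domain satisfies anonymity and cyclical inclusion yet violates top dominance on both sides.

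The heart of the argument, and the step I expect to be the main obstacle, is this extraction of the Table \ref{preference profiles unrestricted for both} preferences from the abstract top-dominance failures. Unpacking the failure of top dominance for women (which by anonymity concerns the common women's domain $\mathcal{P}_w$) gives $x \in M$ and $y, z \in M \cup \{\emptyset\}$ together with $P, \tilde P \in \mathcal{P}_w$ satisfying $x \mathrel{P} y \mathrel{P} z$, $y \mathrel{R} \emptyset$, $x \mathrel{\tilde P} z \mathrel{\tilde P} y$, $z \mathrel{\tilde R} \emptyset$. Single-peakedness with respect to $\prec_M$ then sharply constrains the positions of $y$ and $z$ relative to $x$: when $y$ and $z$ are both men they must lie on opposite sides of $x$ in $\prec_M$, for otherwise single-peakedness would pin down the relative ranking of $y$ and $z$ in any admissible preference in which $x$ dominates both, contradicting the two opposite orderings in $P$ and $\tilde P$ (the cases in which $y$ or $z$ equals $\emptyset$ are handled separately and more directly). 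From this I would isolate an appropriate pair of men (playing the role of $m_1, m_2$) and use cyclical inclusion for women to produce both orientations of that pair together with the truncation in which only one of them is acceptable; the symmetric argument applied to the failure of top dominance for men yields the corresponding pair of women. After reindexing so that these men and women are $m_1, m_2$ and $w_1, w_2$, one assembles $P^1_A$ from these preferences while every other agent declares everyone unacceptable, and checks that the stable set at $P^1_A$ is exactly $\{\mu, \tilde\mu\}$ and that each truncation collapses it to the singleton favoring the deviator. The delicate points are the single-peakedness positional arguments and, above all, ensuring that the women-side and men-side extractions can be carried out simultaneously on one common four-agent submarket.
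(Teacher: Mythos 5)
Your overall architecture matches the paper's: the implications out of \ref{item TD} come from Proposition \ref{proposition wgsp existence} (applied to MPDA or WPDA depending on which side has top dominance), the implications into \ref{item stable and sp} are trivial, and everything reduces to proving \ref{item stable and sp} $\implies$ \ref{item TD} by contraposition. You also correctly identify the key single-peakedness observation (when $y,z$ are both agents, $x$ must lie between them in the prior order) and that the extraction step is where the work lies. However, there is a genuine gap in that extraction: you claim that a double failure of top dominance always delivers the Table \ref{preference profiles unrestricted for both} configuration, in particular the truncations ``$m_2\, \emptyset \ldots$'' and ``$w_1\, \emptyset \ldots$''. That is false. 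A violation of top dominance comes in two flavors: one where $z$ (or $y$) is the outside option, which does yield the needed truncation, and one where $x,y,z$ are three distinct agents all of whom may be acceptable. In the second flavor no truncation is available: the violating preferences themselves need not make anyone unacceptable, and cyclical inclusion only provides (i) one preference with $\emptyset$ on top and (ii) reversals of pairs that are both above $\emptyset$ --- it never produces a preference in which exactly one of two agents is acceptable. Concretely, take $\prec_W: w_1 \prec w_2 \prec w_3$ and let the common men's domain be all single-peaked orders ranking every woman acceptable, plus the single order with $\emptyset$ on top; this satisfies cyclical inclusion and anonymity and violates top dominance (via $w_2 \mathrel{P} w_1 \mathrel{P} w_3$ and $w_2 \mathrel{\tilde P} w_3 \mathrel{\tilde P} w_1$), yet contains no truncation of the form ``$w\, \emptyset \ldots$''. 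Your reduction to an isolated $2\times 2$ submarket with all other agents reporting $\emptyset$ first therefore cannot be carried out in this case.

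The paper's proof handles exactly this by a four-way case split on which flavor of violation occurs on each side. In the three-agent cases it does \emph{not} isolate a $2\times 2$ market: it keeps a third man $m_3$ and/or third woman $w_3$ active with the cyclically-included preference $P_3$ (resp.\ $\tilde P_3$), uses single-peakedness (via the betweenness of $x$) to pin down that $w_1 \mathrel{P_3} w_2 \mathrel{P_3} w_3$ so the third agent does not disturb the stable set at the truthful profile, and replaces the truncation report by the three-agent flip $\tilde P_2$ (e.g.\ $m_2 \mathrel{\tilde P_2} m_3 \mathrel{\tilde P_2} m_1$): the deviation then eliminates $\mu$ not by destroying individual rationality but by creating a blocking pair with the third agent (e.g.\ $(m_3, w_1)$). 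This second mechanism is absent from your proposal, so as written the argument only covers the case where both violations involve the outside option; you would need to add the three-agent construction to close the proof.
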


	Notice that unlike Theorem \ref{theorem sp wgsp equivalence}, Theorem \ref{theorem single-peaked} does not show equivalence between strategy-proofness and group strategy-proofness for any stable matching rule. It rather shows that under cyclical inclusion and anonymity, the set of single-peaked domains compatible for stability and strategy-proofness is the same as that for stability and group strategy-proofness.

	\section{Discussion: Many-to-one matching markets}\label{section many-to-one}

	In this section, we briefly describe the \textit{college admissions problem} \citep{gale1962college}, a well-known many-to-one matching market, and discuss the impossibility of extending our results to such a matching market.

	\subsection{Model}

	There are two finite disjoint sets of agents, the set of \textit{colleges} $C$ and the set of \textit{students} $S$. Let $I = C \cup S$ be the set of all agents. 
	Each college $c$ has a \textit{quota} $q_c \geq 1$ which represents the maximum number of students for which it has places. 
	Let $\mathcal{S}_q := \{\tilde{S} \subseteq S \mid | \tilde{S} | \leq q \}$ be the set of subsets of $S$ with cardinality at most $q$. 
	Each college $c$ has a preference $P_c$ over $\mathcal{S}_{q_c}$ and each student $s$ has a preference $P_s$ over $C \cup \{\emptyset\}$. 
	We say that student $s$ is \textbf{\textit{acceptable}} to college $c$ if $\{s\} \mathrel{P_c} \emptyset$, and analogously, college $c$ is \textit{acceptable} to student $s$ if $c \mathrel{P_s} \emptyset$. 
	We denote by $\mathcal{P}_i$ the set of admissible preferences for agent $i \in I$.
	A \textbf{\textit{preference profile}}, denoted by $P_I = \big((P_c)_{c \in C}, (P_s)_{s \in S} \big)$, is an element of the Cartesian product $\mathcal{P}_I := \underset{c \in C}{\prod}\mathcal{P}_c \times \underset{s \in S}{\prod}\mathcal{P}_s$.
	
	We impose a standard assumption on the preferences of colleges, called \textit{responsiveness} \citep{roth1985college}. This property demonstrates a natural way to extend the preferences of colleges from individual students to sets of students.

	\begin{definition}[Responsiveness]
		A college $c$'s preference $P_c$ satisfies \textit{\textbf{responsiveness}} if for every $\tilde{S} \subseteq S$ with $|\tilde{S}| < q_c$,
		\begin{enumerate}[(i)]
			\item for every $s \in S \setminus \tilde{S}$,
			\begin{equation*}
				(\tilde{S} \cup \{s\}) \mathrel{P_c} \tilde{S} \iff \{s\} \mathrel{P_c} \emptyset, \mbox{ and}
			\end{equation*}
			
			\item for every $s, s' \in S \setminus \tilde{S}$,
			\begin{equation*}
				(\tilde{S} \cup \{s\}) \mathrel{P_c} (\tilde{S} \cup \{s'\}) \iff \{s\} \mathrel{P_c} \{s'\}.
			\end{equation*}
		\end{enumerate}
	\end{definition}

	A (many-to-one) \textbf{\textit{matching}} (between $C$ and $S$) is a function $\nu: I \rightarrow 2^{S} \cup C$ such that
	\begin{enumerate}[(i)]
		\item $\nu(c) \in \mathcal{S}_{q_c}$ for all $c \in C$,
		
		\item $\nu(s) \in C \cup \{\emptyset\}$ for all $s \in S$, and
		
		\item $\nu(s) = c$ if and only if $s \in \nu(c)$ for all $c \in C$ and all $s \in S$.
	\end{enumerate}
	
	A matching $\nu$ is \textit{blocked by a college $c$} at a preference profile $P_I$ if there exists $s \in \nu(c)$ such that $\emptyset \mathrel{P_c} \{s\}$. 
	A matching $\nu$ is \textit{blocked by a student $s$} at a preference profile $P_I$ if $\emptyset \mathrel{P_s} \nu(s)$. A matching $\nu$ is \textit{\textbf{individually rational}} at a preference profile $P_I$ if it is not blocked by any college or student. A matching $\nu$ is \textit{blocked by a pair $(c,s) \in C \times S$} at a preference profile $P_I$ if $c \mathrel{P_s} \nu(s)$ and either (i) $\big[|\nu(c)| < q_c$ and $\{s\} \mathrel{P_c} \emptyset \big]$, or (ii) \big[there exists $s' \in \nu(c)$ such that $\{s\} \mathrel{P_c} \{s'\}\big]$.
	A matching $\nu$ is \textit{\textbf{stable}} at a preference profile $P_I$ if it is individually rational and is not blocked by any pair at that preference profile.
	
	The DA rule naturally extends to the college admissions problem. In the following, we provide a description of the \textit{students-proposing DA (SPDA) rule}.

	\begin{itemize}[leftmargin = 1.35cm]
		\item[\textbf{\textit{Step 1.}}] Each student $s$ applies to her most preferred acceptable college. Every college $c$, which has at least one application, tentatively keeps its top $q_c$ acceptable students among these applications and rejects the rest (if $c$ has fewer acceptable applications than $q_c$, it tentatively keeps all of them).
		
		\item[\textbf{\textit{Step $2$.}}] Every student $s$, who was rejected in the previous step, applies to her next preferred acceptable college. Every college $c$, which has at least one application including any applications tentatively kept from the earlier steps, tentatively keeps its top $q_c$ acceptable students among these applications and rejects the rest.
	\end{itemize}

	This procedure is then repeated from Step 2 till a step such that for each student, one of the following two happens: (i) she is accepted by some college, (ii) she has applied to all acceptable colleges. At this step, the applications tentatively accepted by colleges become permanent. This completes the description of the SPDA rule.

	\subsection{Impossibility of extensions}

	Colleges having responsive preferences is a sufficient condition for the SPDA rule to be strategy-proof for students \citep{roth1985college}.\footnote{\citet{roth1985college} also shows that colleges having responsive preferences is not a sufficient condition for the \textit{colleges-proposing DA rule} to be strategy-proof for colleges.} Not only that, even a coalition of students cannot manipulate the SPDA rule when colleges have responsive preferences.\footnote{\citet{martinez2004group} extend this result further. They show that no coalition of students can manipulate the SPDA rule when the colleges' preferences satisfy both \textit{substitutability} \citep{kelso1982job} and \textit{separability} \citep{barbera1991voting}.} 
	But what if a group of students collude with a non-empty group (possibly singleton) of colleges? Will that coalition be able to manipulate the SPDA rule? Recall that for the marriage problem (a one-to-one matching market), we get a negative answer to such a question (Theorem \ref{theorem structure of manipulative coalitions}). However, it turns out that is not the case for the college admissions problem; a group of students can indeed manipulate the SPDA rule by colluding with a non-empty group of colleges. Therefore, Theorem \ref{theorem structure of manipulative coalitions} cannot be extended to the college admissions problem. We provide an example (Example \ref{example new one}) below to show it.

	\begin{example}\label{example new one}
		Consider a market with three colleges $C = \{c_1, c_2, c_3\}$ and five students $S = \{s_1, \ldots, s_5\}$. College $c_1$ has a quota of 2 and other colleges have a quota of 1. Consider the preference profile $P_I$ such that
		\begin{equation*}
			\begin{aligned}
				& P_{s_1} : c_3 c_1 \ldots, \hspace{2 mm} P_{s_2} : c_1 c_3 \ldots, \hspace{2 mm} P_{s_3} : c_1 \emptyset \ldots, \hspace{2 mm} P_{s_4} : c_1 c_2 \ldots, \hspace{2 mm} P_{s_5} : c_2 \emptyset \ldots,\\
				& P_{c_1} : \{s_1,s_2\} \hspace{1 mm} \{s_1,s_3\} \hspace{1 mm} \{s_1,s_4\} \hspace{1 mm} \{s_2,s_3\} \hspace{1 mm} \{s_2,s_4\} \hspace{1 mm} \{s_3,s_4\} \hspace{1 mm} \{s_1\} \hspace{1 mm} \{s_2\} \hspace{1 mm} \{s_3\} \hspace{1 mm} \{s_4\} \hspace{1 mm} \emptyset \hspace{1 mm} \{s_1,s_5\} \hspace{1 mm} \{s_2,s_5\} \hspace{1 mm} \{s_3,s_5\} \hspace{1 mm} \{s_4,s_5\} \hspace{1 mm} \{s_5\}, \\
				& P_{c_2} : \{s_4\} \hspace{1 mm} \{s_5\} \hspace{1 mm} \emptyset \hspace{1 mm} \{s_1\} \hspace{1 mm} \{s_2\} \hspace{1 mm} \{s_3\}, \hspace{1 mm} \mbox{ and } \hspace{1 mm} P_{c_3} : \{s_2\} \hspace{1 mm} \{s_5\} \hspace{1 mm} \{s_1\} \hspace{1 mm} \emptyset \hspace{1 mm} \{s_3\} \hspace{1 mm} \{s_4\}.
			\end{aligned}
		\end{equation*}
		Clearly, the preferences of colleges satisfy responsiveness. The outcome of the SPDA rule at $P_I$ is $$\big[(c_1, \{s_2,s_3\}), (c_2, \{s_4\}), (c_3, \{s_1\}), (s_5, \emptyset)\big].$$
		
		Let $I'$ be a coalition of student $s_5$ and college $c_1$. Consider the preference profile $\tilde{P}_{I'}$ of this coalition such that
		\begin{equation*}
			\begin{aligned}
				& \tilde{P}_{s_5} : c_3 c_2 \ldots, \hspace{1 mm} \mbox{ and}\\
				& \tilde{P}_{c_1} : \{s_1,s_4\} \hspace{1 mm} \{s_2,s_4\} \hspace{1 mm} \{s_3,s_4\} \hspace{1 mm} \{s_1,s_2\} \hspace{1 mm} \{s_1,s_3\} \hspace{1 mm} \{s_2,s_3\} \hspace{1 mm} \{s_4\} \hspace{1 mm} \{s_1\} \hspace{1 mm} \{s_2\} \hspace{1 mm} \{s_3\} \hspace{1 mm} \emptyset \hspace{1 mm} \{s_4,s_5\} \hspace{1 mm} \{s_1,s_5\} \hspace{1 mm} \{s_2,s_5\} \hspace{1 mm} \{s_3,s_5\} \hspace{1 mm} \{s_5\}.
			\end{aligned}
		\end{equation*}
		$\tilde{P}_{c_1}$ also satisfies responsiveness and the outcome of the SPDA rule at $(\tilde{P}_{I'}, P_{-I'})$ is $$\big[(c_1, \{s_1,s_4\}), (c_2, \{s_5\}), (c_3, \{s_2\}), (s_3, \emptyset)\big].$$
		
		Combining all these facts, it follows that the coalition of student $s_5$ and college $c_1$ manipulates the SPDA rule at $P_I$ via $\tilde{P}_{I'}$.
		\hfill
		$\Diamond$
	\end{example}

	Example \ref{example new one} also shows the impossibility of extending Propositions \ref{proposition update in welfare} and \ref{proposition same set of matched agents} to the college admissions problem. In fact, one important implication of Proposition \ref{proposition same set of matched agents} -- \textit{an unmatched agent cannot be a part of manipulating the DA rule} -- does not hold for the college admissions problem either. To see this, notice that in Example \ref{example new one}, student $s_5$, an unmatched agent, is a part of the manipulative coalition.
	
	Lastly, the main takeaway of this paper -- \textit{equivalence between strategy-proofness and group strategy-proofness under stability} (Theorem \ref{theorem sp wgsp equivalence}) -- is also impossible to extend to the college admissions problem. A detailed explanation is provided below. We use an additional notion for this explanation. Given a preference $P_c$ of a college $c$, let $P^*_c$ denote the corresponding \textit{induced preference} over $S \cup \{\emptyset\}$ where for every $s, s' \in S$, (i) $s \mathrel{P^*_c} s' \iff \{s\} \mathrel{P_c} \{s'\}$ and (ii) $s \mathrel{P^*_c} \emptyset \iff \{s\} \mathrel{P_c} \emptyset$.\bigskip

	\noindent \textbf{Example \ref{example new one}} \textit{(continued)}.
	Construct a domain of preference profiles $\mathcal{P}_I$ for the given market such that $\mathcal{P}_I$ satisfies unrestricted top pairs for students and 
	\begin{equation*}
		\mathcal{P}_{c_1} = \{P_{c_1}, \tilde{P}_{c_1}\}, \hspace{2 mm} \mathcal{P}_{c_2} = \{P_{c_2}\}, \hspace{1 mm} \mbox{ and } \hspace{1 mm} \mathcal{P}_{c_3} = \{P_{c_3}\}.
	\end{equation*}
	Notice that $\mathcal{P}_I$ satisfies top dominance for colleges when judged by the induced preferences. Because of this, and since colleges have responsive preferences, by Theorem 5 in \citet{alcalde1994top}, the SPDA rule is stable and strategy-proof on $\mathcal{P}_I$. 
	
	Recall that the coalition $I'$ of student $s_5$ and college $c_1$ manipulates the SPDA rule at $P_I$ via $\tilde{P}_{I'}$. Moreover, by construction, both $P_I$ and $(\tilde{P}_{I'}, P_{-I'})$ are admissible preference profiles. Combining all these facts, it follows that the SPDA rule is not group strategy-proof on $\mathcal{P}_I$, showing the impossibility of extending Proposition \ref{proposition wgsp existence} and Theorem \ref{theorem sp wgsp equivalence} to the college admissions problem.
	\hfill
	$\Diamond$

	\appendixtocon
	\appendixtitletocon

	\renewcommand{\thetable}{\Alph{section}.\arabic{table}}
	\renewcommand{\thefigure}{\Alph{section}.\arabic{figure}}	
	\renewcommand{\theequation}{\Alph{section}.\arabic{equation}}
	
	\begin{appendices}

		\section{Proof of Proposition \ref{proposition update in welfare}}\label{appendix proof of proposition update in welfare}
		
		\setcounter{equation}{0}
		\setcounter{table}{0}
		\setcounter{figure}{0}

		Before we start proving Theorem \ref{theorem structure of manipulative coalitions}, to facilitate the proof, we present a lemma that was formulated by J. S. Hwang and is proved in \citet{gale1985some}.

		\begin{lemma}[Blocking Lemma]\label{lemma hwang}
			Consider a preference profile $P_A \in \mathbb{L}^p(W \cup \{\emptyset\}) \times \mathbb{L}^q(M \cup \{\emptyset\})$. Let $\mu$ be any individually rational matching at $P_A$ and $M' := \{m \in M \mid \mu(m) \mathrel{P_m} D^M_m(P_A)\}$ the set of men who are strictly better off under $\mu$ than under $D^M(P_A)$. If $M'$ is non-empty, then there is a pair $(m, w) \in (M \setminus M') \times \mu(M')$ that blocks $\mu$ at $P_A$.
		\end{lemma}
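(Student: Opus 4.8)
The plan is to prove the Blocking Lemma by comparing the given individually rational matching $\mu$ with the man-optimal stable matching $\mu^\ast := D^M(P_A)$ — which is stable, hence individually rational, by Remark~\ref{remark gale result} — and to locate the desired blocking pair among the agents of $M' \cup W'$, where $W' := \mu(M')$; I write $\mu^\ast(a) := D^M_a(P_A)$. Two preliminary observations are used throughout. (1)~Every $m \in M'$ has $\mu(m) \in W$, since $\mu(m) \mathrel{P_m} \mu^\ast(m) \mathrel{R_m} \emptyset$ forces $\mu(m) \neq \emptyset$; hence $W' \subseteq W$, $|W'| = |M'| \geq 1$, and $\mu(w) \in M'$ for every $w \in W'$. (2)~For every $w \in W'$, writing $m := \mu(w) \in M'$, stability of $\mu^\ast$ (the pair $(m,w)$ does not block it) together with $w = \mu(m) \mathrel{P_m} \mu^\ast(m)$ gives $\mu^\ast(w) \mathrel{P_w} m = \mu(w)$, strictly (equality $\mu^\ast(w) = m$ would give $\mu^\ast(m) = w = \mu(m)$), and in particular $\mu^\ast(w) \neq \emptyset$ (as $m \mathrel{R_w} \emptyset$).

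I would then split on whether $\mu^\ast$ maps $W'$ into $M'$. In \emph{Case 1}, some $w^\ast \in W'$ has $m^\ast := \mu^\ast(w^\ast) \notin M'$ (a genuine man, by (2)); then $(m^\ast, w^\ast)$ is the required pair: $m^\ast \notin M'$ gives $w^\ast = \mu^\ast(m^\ast) \mathrel{R_{m^\ast}} \mu(m^\ast)$, strictly, since $w^\ast = \mu(m^\ast)$ would force $\mu(w^\ast) = m^\ast \notin M'$, against (1); and $m^\ast = \mu^\ast(w^\ast) \mathrel{P_{w^\ast}} \mu(w^\ast)$ by (2). So $(m^\ast,w^\ast) \in (M\setminus M') \times \mu(M')$ blocks $\mu$.

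In \emph{Case 2}, $\mu^\ast(w) \in M'$ for every $w \in W'$; with injectivity and $|W'| = |M'|$ this makes $\mu^\ast$ a bijection $W' \to M'$, so also $\mu^\ast(M') = W'$. Here I would trace the run of the MPDA algorithm at $P_A$. Each $m \in M'$ is rejected by $\mu(m) \in W'$ at some step (because $\mu(m) \mathrel{P_m} \mu^\ast(m)$ makes $m$ propose to $\mu(m)$, and $\mu(m) \neq \mu^\ast(m)$ makes her reject him); let $T$ be the last step at which \emph{some} man of $M'$ is rejected by \emph{some} woman of $W'$, say $w^\ast \in W'$ rejects $m^\ast \in M'$ at step $T$. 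Set $w_1 := \mu^\ast(m^\ast) \in W'$. Since $m^\ast$ is rejected by $w^\ast$, we have $w_1 \neq w^\ast$ with $w_1$ strictly below $w^\ast$ in $P_{m^\ast}$, so $m^\ast$ proposes to $w_1$ at some step $T_1 > T$ and is held by $w_1$ ever after. At the end of step $T_1 - 1$, $w_1$ is already holding some man $\bar m \neq m^\ast$: her $\mu$-partner $\mu(w_1) \in M'$ proposes to and is rejected by her at some step $\leq T$, from which point on $w_1$ continuously holds a man whose rank in $P_{w_1}$ only rises, and $m^\ast \mathrel{P_{w_1}} \bar m$ because she switches from $\bar m$ to $m^\ast$ at step $T_1$. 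Now $w_1$ rejects $\bar m$ at step $T_1 > T$ with $w_1 \in W'$, so by the choice of $T$, $\bar m \notin M'$. Finally $(\bar m, w_1)$ blocks $\mu$: $\bar m \mathrel{P_{w_1}} \mu(w_1)$ because $\bar m$ lies in $w_1$'s (rising) sequence of held men strictly above the point where she rejected $\mu(w_1)$; and $w_1 \mathrel{P_{\bar m}} \mu(\bar m)$ because $\bar m$ holds $w_1$ at the end of step $T_1 - 1$, so $w_1 \mathrel{R_{\bar m}} \mu^\ast(\bar m) \mathrel{R_{\bar m}} \mu(\bar m)$ (last inequality by $\bar m \notin M'$), strictly, since $w_1 = \mu(\bar m)$ would force $\mu(w_1) = \bar m \notin M'$. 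As $\bar m \in M \setminus M'$ and $w_1 \in \mu(M')$, this is the desired pair.

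The delicate part is Case 2. It rests on a careful bookkeeping of the order in which the women of $W'$ receive proposals and issue rejections in the MPDA run, and everything hinges on taking $T$ to be the \emph{last} step at which a man of $M'$ is rejected by a woman of $W'$: this extremal choice is what simultaneously forces the man $\bar m$ to lie outside $M'$ and pins down the two preference comparisons that make $(\bar m, w_1)$ block $\mu$. Case 1 and the preliminary observations are routine.
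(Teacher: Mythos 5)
The paper does not prove this lemma: it is imported verbatim from \citet{gale1985some} (attributed there to J.~S.~Hwang), so there is no in-paper proof to compare against. Your argument is correct and is, in substance, the classical Gale--Sotomayor proof. Your Case 1 is exactly the standard one (if some $w^\ast\in\mu(M')$ has $\mu^\ast(w^\ast)\notin M'$, stability of $D^M(P_A)$ hands you the blocking pair directly), and your Case 2 is the standard algorithmic argument with a slight variation in the extremal choice: you take the \emph{last rejection} of an $M'$-man by a $W'$-woman, where the textbook proof takes the \emph{last proposal} from an $M'$-man to a $W'$-woman; both choices force the displaced man $\bar m$ out of $M'$ and yield the same blocking pair. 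I checked the delicate steps --- that $w_1$ is already engaged at the end of step $T_1-1$ (because $\mu(w_1)\in M'$ has proposed to and been rejected by her by step $T\le T_1-1$, and she is acceptable-held from then on), that $\bar m\notin M'$ by maximality of $T$, and the two strict comparisons making $(\bar m,w_1)$ block $\mu$ --- and they all go through; the individual rationality of $\mu$ is used where needed (to make $\mu(w_1)$ acceptable to $w_1$ and $\mu(m)$ acceptable to $m\in M'$). No gaps.
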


		\begin{proof}[\textbf{Completion of the proof of Proposition \ref{proposition update in welfare}.}]
			Since $D^M$ is manipulable at $P_A$ by coalition $A'$ via $\tilde{P}_{A'}$, we have
			\begin{equation}\label{equation claim}
				D^M_a(\tilde{P}_{A'},P_{-A'}) \mathrel{P_a} D^M_a(P_A) \mbox{ for all } a \in A'.
			\end{equation}
			For ease of presentation, we denote the matching $D^M(P_A)$ by $\mu$ and the matching $D^M(\tilde{P}_{A'},P_{-A'})$ by $\tilde{\mu}$ in this proof.

			\paragraph{Proof of part \ref{item structure men weakly worse}.} We first show that $\tilde{\mu}$ is individually rational at $P_A$.
			Since $\tilde{\mu}$ is stable at $(\tilde{P}_{A'},P_{-A'})$ (see Remark \ref{remark gale result}), we have
			\begin{equation}\label{equation IR 1}
				\tilde{\mu}(a) \mathrel{R_a} \emptyset \mbox { for all } a \in A \setminus A'.
			\end{equation}
			Furthermore, $\mu$ being stable at $P_A$ (see Remark \ref{remark gale result}) implies that $\mu (a) \mathrel{R_a} \emptyset$ for all $a \in A'$. This, along with \eqref{equation claim}, yields 
			\begin{equation}\label{equation IR 2}
				\tilde{\mu}(a) \mathrel{P_a} \emptyset \mbox { for all } a \in A'.
			\end{equation} 
			\eqref{equation IR 1} and \eqref{equation IR 2} together imply that $\tilde{\mu}$ is individually rational at $P_A$.
			
			We now proceed to complete the proof of part \ref{item structure men weakly worse}. Let $M^{+} := \{m \in M \mid \tilde{\mu}(m) \mathrel{P_m} \mu(m)\}$ be the set of men who are strictly better off under $\tilde{\mu}$ than under $\mu$. Assume for contradiction that $M^{+}$ is non-empty. Since $\tilde{\mu}$ is individually rational at $P_A$ and $M^{+}$ is non-empty, by Lemma \ref{lemma hwang}, it follows that there is a pair $(m, w) \in (M \setminus M^{+}) \times \tilde{\mu}(M^{+})$ that blocks $\tilde{\mu}$ at $P_A$. 
			
			\begin{claim}\label{claim no change in preferences}
				$m, w \in A \setminus A'$.
			\end{claim}
			
			\begin{claimproof}[Proof of Claim \ref{claim no change in preferences}.]
				Note that $M \cap A' \subseteq M^{+}$. Since $m \in M \setminus M^{+}$, this implies 
				\begin{equation}\label{equation claim 1}
					m \in A \setminus A'.
				\end{equation}
				
				Consider the man $m'$ such that $\tilde{\mu}(m') = w$. Note that $m'$ is well-defined since $w \in \tilde{\mu}(M^{+})$. Clearly, $m' \in M^{+}$. The facts $m' \in M^{+}$ and $\tilde{\mu}(m') = w$ together imply 
				\begin{equation}\label{equation claim 2}
					w \mathrel{P_{m'}} \mu(m').
				\end{equation}
				Since $\mu$ is stable at $P_A$, \eqref{equation claim 2} implies $\mu(w) \mathrel{P_w} m'$. This, along with the fact $\tilde{\mu}(m') = w$ and \eqref{equation claim}, yields 
				\begin{equation}\label{equation claim 3}
					w \in A \setminus A'.
				\end{equation}
				\eqref{equation claim 1} and \eqref{equation claim 3} together complete the proof of Claim \ref{claim no change in preferences}.
			\end{claimproof}

			Recall that $\tilde{\mu}$ is stable at $(\tilde{P}_{A'},P_{-A'})$. By Claim \ref{claim no change in preferences}, it follows that $m$ and $w$ do not change their preferences from $P_A$ to $(\tilde{P}_{A'},P_{-A'})$. This, together with the fact that $(m, w)$ blocks $\tilde{\mu}$ at $P_A$, implies that $(m, w)$ blocks $\tilde{\mu}$ at $(\tilde{P}_{A'},P_{-A'})$, a contradiction to the fact that $\tilde{\mu}$ is stable at $(\tilde{P}_{A'},P_{-A'})$. This completes the proof of part \ref{item structure men weakly worse} of Theorem \ref{theorem structure of manipulative coalitions}.

			\paragraph{Proof of part \ref{item structure women weakly better}.} Assume for contradiction that there exists $w \in W$ such that
			\begin{equation}\label{equation women worse}
				\mu(w) \mathrel{P_w} \tilde{\mu}(w).
			\end{equation}
			
			We first show that $\mu(w) \in M$.
			Note that \eqref{equation women worse} and \eqref{equation claim} together imply $w \in A \setminus A'$, which means woman $w$ does not change her preference from $P_A$ to $(\tilde{P}_{A'},P_{-A'})$. Since $\tilde{\mu}$ is stable at $(\tilde{P}_{A'},P_{-A'})$ and woman $w$ does not change her preference from $P_A$ to $(\tilde{P}_{A'},P_{-A'})$, we have $\tilde{\mu}(w) \mathrel{R_w} \emptyset$, which, along with \eqref{equation women worse}, yields $\mu(w) \mathrel{P_w} \emptyset$. This, in particular, means $\mu(w) \in M$.
			
			Consider the man $m$ such that $\mu(w) = m$. By part \ref{item structure men weakly worse} of this theorem, we have $w \mathrel{R_m} \tilde{\mu}(m)$. Note also that \eqref{equation women worse} implies $\tilde{\mu}(m) \neq w$. Combining the facts $w \mathrel{R_m} \tilde{\mu}(m)$ and $\tilde{\mu}(m) \neq w$, we have
			\begin{equation}\label{equation man matched}
				w \mathrel{P_m} \tilde{\mu}(m).
			\end{equation}
			
			It follows from \eqref{equation women worse} and \eqref{equation man matched} that $(m, w)$ blocks $\tilde{\mu}$ at $P_A$. Recall that $\tilde{\mu}$ is stable at $(\tilde{P}_{A'},P_{-A'})$ and woman $w$ does not change her preference from $P_A$ to $(\tilde{P}_{A'},P_{-A'})$. Also note that by Theorem \ref{theorem structure of manipulative coalitions}, man $m$ does not change his preference from $P_A$ to $(\tilde{P}_{A'},P_{-A'})$. Combining all these facts, it follows that $(m, w)$ blocks $\tilde{\mu}$ at $(\tilde{P}_{A'},P_{-A'})$, a contradiction to the fact that $\tilde{\mu}$ is stable at $(\tilde{P}_{A'},P_{-A'})$. This completes the proof of part \ref{item structure women weakly better} of Theorem \ref{theorem structure of manipulative coalitions}.
		\end{proof}

		\section{Proof of Proposition \ref{proposition same set of matched agents}}\label{appendix proof of proposition same set of matched agents}
		
		\setcounter{equation}{0}
		\setcounter{table}{0}
		\setcounter{figure}{0}

		For ease of presentation, we denote the matching $D^M(P_A)$ by $\mu$ and the matching $D^M(\tilde{P}_{A'},P_{-A'})$ by $\tilde{\mu}$ in this proof.
		
		Note that by Theorem \ref{theorem structure of manipulative coalitions}, men do not change their preferences from $P_A$ to $(\tilde{P}_{A'},P_{-A'})$. Because of this, and since $\tilde{\mu}$ is stable at $(\tilde{P}_{A'},P_{-A'})$ (see Remark \ref{remark gale result}), we have $\tilde{\mu}(m) \mathrel{R_m} \emptyset$ for all $m \in M$. This, together with Proposition \ref{proposition update in welfare}.\ref{item structure men weakly worse}, implies
		\begin{equation*}
			\mu(m) \mathrel{R_m} \tilde{\mu}(m) \mathrel{R_m} \emptyset \mbox { for all } m \in M,
		\end{equation*}
		which, in particular, means
		\begin{equation}\label{equation men subset}
			\{m \in M \mid m \mbox{ is matched under } \tilde{\mu}\} \subseteq \{m \in M \mid m \mbox{ is matched under } \mu\}.
		\end{equation}
		
		Similarly, since $\mu$ is stable at $P_A$ (see Remark \ref{remark gale result}), we have $\mu(w) \mathrel{R_w} \emptyset$ for all $w \in W$. This, together with Proposition \ref{proposition update in welfare}.\ref{item structure women weakly better}, implies
		\begin{equation*}
			\tilde{\mu}(w) \mathrel{R_w} \mu(w) \mathrel{R_w} \emptyset \mbox { for all } w \in W,
		\end{equation*}
		which, in particular, means
		\begin{equation}\label{equation women subset}
			\{w \in W \mid w \mbox{ is matched under } \tilde{\mu}\} \supseteq \{w \in W \mid w \mbox{ is matched under } \mu\}.
		\end{equation}
		
		Furthermore, by the definition of a matching, 
		\begin{equation}\label{equation same number}
			\begin{aligned}
				& |\{m \in M \mid m \mbox{ is matched under } \tilde{\mu}\}| = |\{w \in W \mid w \mbox{ is matched under } \tilde{\mu}\}|, \mbox{ and}\\
				& |\{m \in M \mid m \mbox{ is matched under } \mu\}| = |\{w \in W \mid w \mbox{ is matched under } \mu\}|.
			\end{aligned}
		\end{equation}
		\eqref{equation men subset}, \eqref{equation women subset}, and \eqref{equation same number} together complete the proof of Proposition \ref{proposition same set of matched agents}.
		\hfill
		\qed

		\section{Proof of Theorem \ref{theorem sp wgsp equivalence}}\label{appendix proof of theorem sp wgsp equivalence}
		
		\setcounter{equation}{0}
		\setcounter{table}{0}
		\setcounter{figure}{0}

		We first prove a couple of lemmas that we use in the proof of Theorem \ref{theorem sp wgsp equivalence}.

		\subsection{Lemmas \ref{lemma only if} and \ref{lemma sufficient for incompatibility} and their proofs}

		Our first lemma (Lemma \ref{lemma only if}) says that whenever the domain satisfies unrestricted top pairs for men, if there exists a stable and strategy-proof matching rule, it must be the MPDA rule. \citet{alcalde1994top} prove a similar result (see Theorem 3 in \citet{alcalde1994top}) under the assumption that the sets of admissible preferences for men are unrestricted. Our result extends theirs.

		\begin{lemma}\label{lemma only if}
			Let $\mathcal{P}_A$ satisfy unrestricted top pairs for men. If there exists a stable and strategy-proof matching rule on $\mathcal{P}_A$, then the rule must be unique and is the MPDA rule.
		\end{lemma}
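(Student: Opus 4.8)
The plan is to show that any stable and strategy-proof rule $\varphi$ on $\mathcal{P}_A$ must coincide with the MPDA rule $D^M$ at every profile; this delivers uniqueness and the identification simultaneously. Throughout I would use the fact that, for a fixed profile $P_A$, the matching $D^M(P_A)$ is the man-optimal stable matching at $P_A$ --- a property of the profile alone \citep{gale1962college} --- so working on the restricted domain $\mathcal{P}_A \subseteq \mathbb{L}^p(W\cup\{\emptyset\})\times\mathbb{L}^q(M\cup\{\emptyset\})$ costs nothing here.

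Suppose toward a contradiction that $\varphi(P_A)\neq D^M(P_A)$ for some $P_A\in\mathcal{P}_A$. Since $\varphi(P_A)$ is stable at $P_A$ and $D^M(P_A)$ is man-optimal, $D^M_m(P_A)\mathrel{R_m}\varphi_m(P_A)$ for every $m\in M$, and the inequality is strict for at least one $m$ (if all were equalities the two matchings would agree on all men, hence everywhere). Fix such an $m$ and put $w^\ast:=D^M_m(P_A)$; individual rationality of $\varphi(P_A)$ rules out $w^\ast=\emptyset$, so $w^\ast\in W$. Next I would call on the richness hypothesis: by part \ref{item unrestricted top} of Definition \ref{definition unrestricted top pairs} there is $\hat P_m\in\mathcal{P}_m$ in which $w^\ast$ is $m$'s top choice and his only acceptable woman. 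Consider the profile $(\hat P_m,P_{-m})$. Stability of $\varphi$ forces $\varphi_m(\hat P_m,P_{-m})\in\{w^\ast,\emptyset\}$; if it equalled $w^\ast$, then strategy-proofness of $\varphi$ (the deviation of $m$ from $P_m$ to $\hat P_m$, evaluated at the true $P_m$) would give $\varphi_m(P_A)\mathrel{R_m}w^\ast=D^M_m(P_A)$, contradicting $D^M_m(P_A)\mathrel{P_m}\varphi_m(P_A)$. Hence $\varphi_m(\hat P_m,P_{-m})=\emptyset$.

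The contradiction then comes from re-examining $D^M(P_A)$ at the profile $(\hat P_m,P_{-m})$. Changing only $m$'s preference so that $w^\ast$ becomes his unique acceptable --- and top --- choice cannot create a new blocking pair (any blocking pair would have to involve $m$, which is impossible since $w^\ast$ is now $m$'s favourite) nor destroy individual rationality, so $D^M(P_A)$ remains stable at $(\hat P_m,P_{-m})$, and under it $m$ is matched (to $w^\ast$). By \citet{mcvitie1970stable}, the set of unmatched agents is the same across all stable matchings at a given profile, so $m$ is matched under $\varphi(\hat P_m,P_{-m})$ as well --- contradicting $\varphi_m(\hat P_m,P_{-m})=\emptyset$. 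Therefore no such $m$ exists, i.e.\ $\varphi_m(P_A)\mathrel{R_m}D^M_m(P_A)$ for all $m\in M$; since $\varphi(P_A)$ is stable and $D^M(P_A)$ is the man-optimal stable matching at $P_A$, this forces $\varphi(P_A)=D^M(P_A)$. As $P_A$ was arbitrary, $\varphi=D^M$.

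I expect the two slightly delicate points to be (i) the bookkeeping that $D^M$ still yields the man-optimal stable matching profile-by-profile on a restricted domain, so that man-optimality and single-profile stability may be invoked freely, and (ii) the verification in the last paragraph that $D^M(P_A)$ survives as a stable matching after $m$'s unilateral switch to $\hat P_m$ --- this is the hinge that lets the invariance of \citet{mcvitie1970stable} produce the contradiction. Everything else is a routine use of individual rationality and strategy-proofness; note that only part \ref{item unrestricted top} of the unrestricted-top-pairs condition is actually needed for this lemma.
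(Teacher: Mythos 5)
Your proposal is correct and follows essentially the same route as the paper: man-optimality of $D^M$ identifies a man $m$ with $D^M_m(P_A) \mathrel{P_m} \varphi_m(P_A)$, condition \ref{item unrestricted top} of Definition \ref{definition unrestricted top pairs} supplies the preference making $w^\ast = D^M_m(P_A)$ the unique acceptable woman, and the \citet{mcvitie1970stable} invariance of the set of unmatched agents pins down $\varphi_m(\hat{P}_m, P_{-m}) = w^\ast$, contradicting strategy-proofness. The only cosmetic difference is that you split the conclusion into the two subcases $\varphi_m(\hat{P}_m,P_{-m}) \in \{w^\ast, \emptyset\}$ where the paper derives $\varphi_m(\hat{P}_m,P_{-m}) = w^\ast$ directly; your closing observation that only condition \ref{item unrestricted top} is needed matches the paper's own Note \ref{note}.
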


		Before we start proving Lemma \ref{lemma only if}, to facilitate the proof, we present two results -- one from \citet{gale1962college} and the other from \citet{mcvitie1970stable}.

		\begin{remark}[\citealp{gale1962college}]\label{remark men better}
			Every man weakly prefers the match by the MPDA rule to the match under any other stable matching. Formally, for every $P_A \in \mathbb{L}^p(W \cup \{\emptyset\}) \times \mathbb{L}^q(M \cup \{\emptyset\})$, every stable matching $\mu$ at $P_A$, and every $m \in M$, we have $D^M_m(P_A) \mathrel{R_m} \mu(m)$.
		\end{remark}

		\begin{remark}[\citealp{mcvitie1970stable}]\label{remark same unmatched set}
			The set of unmatched agents remains the same across the stable matchings at a preference profile. Formally, for every $P_A \in \mathbb{L}^p(W \cup \{\emptyset\}) \times \mathbb{L}^q(M \cup \{\emptyset\})$, every stable matchings $\mu$ and $\nu$ at $P_A$, and every $a \in A$, $\mu(a) = \emptyset$ if and only if $\nu(a) = \emptyset$.
		\end{remark}

		\begin{proof}[\textbf{Proof of Lemma \ref{lemma only if}}]
			Let $\varphi$ be a stable and strategy-proof matching rule on $\mathcal{P}_A$. Assume for contradiction that $\varphi \not\equiv D^M$ on $\mathcal{P}_A$. Then, there exist $P_A \in \mathcal{P}_A$ and $m \in M$ such that $\varphi_m(P_A) \neq D^M_m(P_A)$. Because of this, and since $\varphi(P_A)$ is stable at $P_A$, by Remark \ref{remark men better}, we have $D^M_m(P_A) \in W$ and
			\begin{equation}\label{equation mpda better}
				D^M_m(P_A) \mathrel{P_m} \varphi_m(P_A).
			\end{equation}
			
			Consider the preference $\tilde{P}_m \in \mathcal{P}_m$ such that $D^M_m(P_A) \mathrel{\tilde{P}_m} \emptyset \mathrel{\tilde{P}_m} w$ for all $w \in W \setminus \{D^M_m(P_A)\}$. Note that $\tilde{P}_m$ is well-defined since $\mathcal{P}_A$ satisfies unrestricted top pairs for men and $D^M_m(P_A) \in W$. Clearly, $D^M(P_A)$ is stable at $(\tilde{P}_m, P_{-m})$.
			
			However, since $D^M(P_A)$ is stable at $(\tilde{P}_m, P_{-m})$, by Remark \ref{remark same unmatched set} and the construction of $\tilde{P}_m$, we have $\varphi_m(\tilde{P}_m, P_{-m}) = D^M_m(P_A)$, which, together with \eqref{equation mpda better}, contradicts strategy-proofness of $\varphi$ on $\mathcal{P}_A$. This completes the proof of Lemma \ref{lemma only if}.
		\end{proof}

		\begin{note}\label{note}
			From an examination of the proof of Lemma \ref{lemma only if}, we can relax the richness condition ``unrestricted top pairs for men'' in Lemma \ref{lemma only if}. In fact, Condition \ref{item unrestricted top} of Definition \ref{definition unrestricted top pairs} is sufficient for the proof. 
			In view of this observation, by Proposition \ref{proposition wgsp existence}, we have the following result: 
			\textit{Suppose $\mathcal{P}_A$ satisfies Condition \ref{item unrestricted top} of Definition \ref{definition unrestricted top pairs} for men and top dominance for women. Then, the MPDA rule is the unique stable and strategy-proof matching rule on $\mathcal{P}_A$, which is also group strategy-proof.}
		\end{note}

		Our second lemma (Lemma \ref{lemma sufficient for incompatibility}) identifies a richness condition of the domain for women under which stability and strategy-proofness become incompatible whenever the domain satisfies unrestricted top pairs for men.

		\begin{lemma}\label{lemma sufficient for incompatibility}
			Let $\mathcal{P}_A$ satisfy unrestricted top pairs for men. 
			Suppose there exists an alternating sequence $m^1, w^1, m^2, w^2, \ldots, m^k, w^k$ of distinct men and women such that 
			\begin{enumerate}[(i)]
				\item for every $i = 2, \ldots, k$, there exists $P_{w^i} \in \mathcal{P}_{w^i}$ with $m^i \mathrel{P_{w^i}} m^{i-1} \mathrel{P_{w^i}} \emptyset$, and
				
				\item there exist $P_{w^1}, \tilde{P}_{w^1} \in \mathcal{P}_{w^1}$ with $m^1 \mathrel{P_{w^1}} m^{k} \mathrel{P_{w^1}} \emptyset$ such that for some $z \in M \cup \{\emptyset\}$,
				\begin{enumerate}[(a)]
					\item $m^k \mathrel{P_{w^1}} z$, and
					
					\item $z \mathrel{\tilde{R}_{w^1}} \emptyset$ and $m^1 \mathrel{\tilde{P}_{w^1}} z \mathrel{\tilde{P}_{w^1}} m^k$.
				\end{enumerate}
			\end{enumerate}
			Then, no stable matching rule on $\mathcal{P}_A$ is strategy-proof.
		\end{lemma}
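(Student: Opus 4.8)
The plan is to prove the impossibility by an explicit construction. Fix a stable matching rule $\varphi$ on $\mathcal{P}_A$ and suppose, toward a contradiction, that $\varphi$ is strategy-proof. Using unrestricted top pairs for men, I would build a profile exhibiting a ``rotation'' around the sequence $m^1,w^1,\dots,m^k,w^k$: give each $m^i$ an admissible preference whose top two women are $w^{i+1}$ and $w^i$ (indices cyclic, $w^{k+1}:=w^1$), available by part~(i) of unrestricted top pairs for men; give every other man a preference topped by $\emptyset$, available by part~(iii), except that when $z\in M\setminus\{m^1,\dots,m^k\}$ I give the man $z$ an admissible preference in which $w^1$ is the only acceptable woman, available by part~(ii). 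For the women I use $P_{w^2},\dots,P_{w^k}$ from hypothesis~(i), $P_{w^1}$ from hypothesis~(ii) for $w^1$, and arbitrary admissible preferences for the rest. A direct check shows that this profile $P_A$ has exactly two stable matchings, $\mu$ with $\mu(m^i)=w^i$ and $\nu$ with $\nu(m^i)=w^{i+1}$: stability of $\mu$ uses $m^i\mathrel{P_{w^i}}m^{i-1}$ (for $i\ge 2$) and $m^1\mathrel{P_{w^1}}m^k$ to kill the candidate blocking pairs $(m^i,w^{i+1})$, stability of $\nu$ is immediate since there each $m^i$ is at his reported top, and there is no third stable matching because, by Remark~\ref{remark same unmatched set} applied to $\mu$, every $m^i$ is matched in every stable matching while $m^i$ finds only $w^i,w^{i+1}$ acceptable, so the single cycle forces either the ``all-$w^i$'' matching $\mu$ or the ``all-$w^{i+1}$'' matching $\nu$ (and $z$, who wants only $w^1$, stays unmatched since $w^1$ strictly prefers $m^k$ to $z$ under $P_{w^1}$). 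Hence $\varphi(P_A)\in\{\mu,\nu\}$.

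Now I would split into cases. If $\varphi(P_A)=\mu$, then $m^1$ receives $w^1$, strictly worse for him than his reported top $w^2$. Let $m^1$ deviate to the admissible preference in which $w^2$ is his only acceptable woman (part~(ii)). I claim $m^1$ is matched to $w^2$ in every stable matching $\sigma$ at the resulting profile: if $m^1$ were unmatched, individual rationality of $w^2$ (who ranks $m^2$ first and $m^1$ second) forces $\sigma(w^2)=m^2$, and then the chain $m^i\mathrel{P_{w^i}}m^{i-1}$ propagates $\sigma(w^i)=m^i$ for $i=2,\dots,k$, which via the pair $(m^k,w^1)$ forces $w^1$ to be matched to $m^1$---impossible, since $w^1$ is unacceptable to the deviated $m^1$. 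So $\varphi$ matches $m^1$ to $w^2$ at the new profile, a profitable manipulation, a contradiction.

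If instead $\varphi(P_A)=\nu$, then $w^1$ receives $m^k$, which under her true preference $P_{w^1}$ is worse than $m^1$. Let $w^1$ deviate to $\tilde P_{w^1}$. Conditions~(ii)(a)--(b) are exactly what makes this deviation destabilize $\nu$ while keeping $\mu$ stable: if $z=\emptyset$, then $m^k$ becomes unacceptable to $w^1$ under $\tilde P_{w^1}$, so $\nu$ loses individual rationality; if $z\in M$ lies outside the sequence, then under $\tilde P_{w^1}$ the pair $(z,w^1)$ blocks $\nu$ (recall $z$'s top is $w^1$), while $z$ still stays unmatched in every stable matching. In both cases the same cycle argument leaves $\mu$ as the unique stable matching at the new profile, so $\varphi$ matches $w^1$ to $m^1\mathrel{P_{w^1}}m^k$, again a profitable manipulation. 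The remaining configuration, $z=m^j$ with $1<j<k$, is symmetric: run the whole argument on the shorter cycle $m^1,w^1,\dots,m^j,w^j$, treating $m^k$ as the extra man whose only acceptable woman is $w^1$, using $\tilde P_{w^1}$ for $w^1$ on the constructed profile and letting her deviate back to $P_{w^1}$; the roles of the two preferences are interchanged, but conditions~(ii)(a)--(b) are precisely what validates this variant.

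I expect the main obstacle to be the bookkeeping that pins down the set of stable matchings at the constructed profile and at each post-deviation profile---across the three possibilities for $z$---and that checks that the extra man and the non-sequence agents never introduce a new stable matching or a new blocking pair; the rotation itself and the cascade along the chain of women are routine.
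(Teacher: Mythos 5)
Your overall architecture --- build the rotation profile around $m^1,w^1,\dots,m^k,w^k$, let $w^1$ switch between $P_{w^1}$ and $\tilde P_{w^1}$, and split on the three configurations of $z$ --- matches the paper's. But there is a genuine gap in the step you yourself defer to ``bookkeeping'': the claim that the constructed profile has exactly two stable matchings (and that each post-deviation profile has a unique one) is false in general, so the case analysis $\varphi(P_A)\in\{\mu,\nu\}$ does not exhaust the possibilities. Unrestricted top pairs only guarantees, for each man, \emph{some} admissible preference with a prescribed top pair; the positions of $\emptyset$ and of the remaining women in that preference are not under your control, so you cannot arrange that $m^i$ finds only $w^i,w^{i+1}$ acceptable. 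Likewise hypothesis (i) only pins down the relative order of $m^i$, $m^{i-1}$, $\emptyset$ in $P_{w^i}$, not where the other men of the sequence sit (your parenthetical ``$w^2$ ranks $m^2$ first and $m^1$ second'' over-assumes this too). Concretely, take $k=3$ with $P_{m^1}: w^2 w^1 w^3 \emptyset$, $P_{m^2}: w^3 w^2 w^1 \emptyset$, $P_{m^3}: w^1 w^3 w^2 \emptyset$, $P_{w^1}: m^2 m^1 m^3 \emptyset$, $P_{w^2}: m^3 m^2 m^1 \emptyset$, $P_{w^3}: m^1 m^3 m^2 \emptyset$: every constraint of your construction and of the lemma's hypotheses holds, yet $\big[(m^1,w^3),(m^2,w^1),(m^3,w^2)\big]$ is a third stable matching (each woman gets her top man), so a stable rule may select it and neither of your two cases applies. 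The same defect recurs at the post-deviation profiles, where you again assert uniqueness via the cycle argument.

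The repair is exactly what the paper does before any construction: prove that on a domain satisfying unrestricted top pairs for men, \emph{any} stable and strategy-proof rule must coincide with the MPDA rule (Lemma \ref{lemma only if}). The proof of that lemma is essentially your ``$m^1$ truncates so that $w^2$ is his only acceptable woman'' step, run for an arbitrary man and justified by Remark \ref{remark same unmatched set} rather than by a chain argument. Once $\varphi=D^M$ is known, the outcome at each constructed profile is obtained by running the deferred acceptance algorithm itself --- which only looks at the men's top two entries and the women's comparisons among the proposers, and is insensitive to whatever extra stable matchings exist --- and the manipulation by $w^1$ falls out directly in each of the three configurations of $z$. Without that lemma, or some substitute that controls which stable matching $\varphi$ selects, your argument is incomplete.
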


		\begin{proof}[\textbf{Proof of Lemma \ref{lemma sufficient for incompatibility}.}]
			Suppose there exists an alternating sequence $m^1, w^1, m^2, w^2,$ $\ldots, m^k, w^k$ of distinct men and women such that 
			\begin{enumerate}[(i)]
				\item for every $i = 2, \ldots, k$, there exists $P_{w^i} \in \mathcal{P}_{w^i}$ with $m^i \mathrel{P_{w^i}} m^{i-1} \mathrel{P_{w^i}} \emptyset$, and
				
				\item there exist $P_{w^1}, \tilde{P}_{w^1} \in \mathcal{P}_{w^1}$ with $m^1 \mathrel{P_{w^1}} m^{k} \mathrel{P_{w^1}} \emptyset$ such that for some $z \in M \cup \{\emptyset\}$,
				\begin{enumerate}[(a)]
					\item $m^k \mathrel{P_{w^1}} z$, and
					
					\item $z \mathrel{\tilde{R}_{w^1}} \emptyset$ and $m^1 \mathrel{\tilde{P}_{w^1}} z \mathrel{\tilde{P}_{w^1}} m^k$.
				\end{enumerate}
			\end{enumerate}
			Assume for contradiction that there exists a stable and strategy-proof matching rule on $\mathcal{P}_A$. Note that since $\mathcal{P}_A$ satisfies unrestricted top pairs for men, by Lemma \ref{lemma only if}, it must be the MPDA rule. We distinguish the following two cases.\medskip
			
			\noindent\textbf{\textsc{Case} 1}: Suppose $z = \emptyset$.
			
			Since $\mathcal{P}_A$ satisfies unrestricted top pairs for men, we can construct a collection of preferences $P_{-\{w^1, \ldots, w^k\}}$ of all agents except for women $w^1, \ldots, w^k$ such that
			\begin{equation*}
				\begin{aligned}
					& P_{m^i} : w^{i+1} w^i \ldots \mbox{ for all } i = 1, \ldots, k-1, \\
					& P_{m^k} : w^1 w^k \ldots, \mbox{ and}\\
					& \tau(P_m) = \emptyset \mbox{ for all } m \notin \{m^1, \ldots, m^k\}.
				\end{aligned}
			\end{equation*}
			(Recall that $w^1 w^2 \ldots$ denotes a preference that ranks $w^1$ first and $w^2$ second.)
			
			It is straightforward to verify the following facts.
			\begin{equation}\label{equation strongest case 1}
				\begin{aligned}
					& D^M(P_{w^1}, P_{w^2}, \ldots, P_{w^k}, P_{-\{w^1, \ldots, w^k\}}) = \left[ 
					\begin{aligned}
						& (m^k,w^1), (m^{i},w^{i+1}) \hspace{2 mm} \forall \hspace{2 mm} i = 1, \ldots, k-1,\\
						& (a, \emptyset) \hspace{2 mm} \forall \hspace{2 mm} a \notin \{m^1, \ldots, m^k, w^1, \ldots, w^k\} 
					\end{aligned}
					\right] \mbox{ and}\\
					& D^M(\tilde{P}_{w^1}, P_{w^2}, \ldots, P_{w^k}, P_{-\{w^1, \ldots, w^k\}}) = \left[ 
					\begin{aligned}
						& (m^i,w^i) \hspace{2 mm} \forall \hspace{2 mm} i = 1, \ldots, k,\\
						& (a, \emptyset) \hspace{2 mm} \forall \hspace{2 mm} a \notin \{m^1, \ldots, m^k, w^1, \ldots, w^k\} 
					\end{aligned}
					\right].
				\end{aligned}
			\end{equation}
			However, \eqref{equation strongest case 1} implies that $w^1$ can manipulate the MPDA rule at $(P_{w^1}, P_{w^2}, \ldots, P_{w^k}, P_{-\{w^1, \ldots, w^k\}})$ via $\tilde{P}_{w^1}$, a contradiction to the fact that the MPDA rule is strategy-proof on $\mathcal{P}_A$. This completes the proof for Case 1.\medskip
			
			\noindent\textbf{\textsc{Case} 2}: Suppose $z = \tilde{m}$ for some $\tilde{m} \in M$.
			\begin{enumerate}[(i)]
				\item Suppose $\tilde{m} \notin \{m^1, \ldots, m^k\}$.
				
				Since $\mathcal{P}_A$ satisfies unrestricted top pairs for men, we can construct a collection of preferences $P_{-\{w^1, \ldots, w^k\}}$ of all agents except for women $w^1, \ldots, w^k$ such that
				\begin{equation*}
					\begin{aligned}
						& P_{m^i} : w^{i+1} w^i \ldots \mbox{ for all } i = 1, \ldots, k-1, \\
						& P_{m^k} : w^1 w^k \ldots,\\
						& P_{\tilde{m}} : w^1 \emptyset \ldots, \mbox{ and}\\
						& \tau(P_m) = \emptyset \mbox{ for all } m \notin \{m^1, \ldots, m^k, \tilde{m}\}.
					\end{aligned}
				\end{equation*}
				Using a similar argument as for Case 1, it follows that $w^1$ can manipulate the MPDA rule at $(P_{w^1}, P_{w^2},$ $\ldots,P_{w^k}, P_{-\{w^1, \ldots, w^k\}})$ via $\tilde{P}_{w^1}$, a contradiction to the fact that the MPDA rule is strategy-proof on $\mathcal{P}_A$.
				
				\item Suppose $\tilde{m} \in \{m^1, \ldots, m^k\}$.
				
				Since $\tilde{m} \in \{m^1, \ldots, m^k\}$, the fact $m^1 \mathrel{\tilde{P}_{w^1}} \tilde{m} \mathrel{\tilde{P}_{w^1}} m^k$ implies $\tilde{m} \in \{m^2, \ldots, m^{k-1}\}$. Let $\tilde{m} = m^{k^*}$ for some $k^* \in \{2, \ldots, k-1\}$. Since $\mathcal{P}_A$ satisfies unrestricted top pairs for men, we can construct a collection of preferences $P_{-\{w^1, \ldots, w^k\}}$ of all agents except for women $w^1, \ldots, w^k$ such that
				\begin{equation*}
					\begin{aligned}
						& P_{m^i} : w^{i+1} w^i \ldots \mbox{ for all } i = 1, \ldots, k^* - 1,\\
						& P_{m^{k^*}} : w^1 w^{k^*} \ldots,\\
						& P_{m^k} : w^1 \emptyset \ldots, \mbox{ and}\\
						& \tau(P_m) = \emptyset \mbox{ for all } m \notin \{m^1, \ldots, m^{k^*}, m^k\}.
					\end{aligned}
				\end{equation*}
				It is straightforward to verify the following facts.
				\begin{equation}\label{equation strongest case 2}
					\begin{aligned}
						& D^M(\tilde{P}_{w^1}, P_{w^2}, \ldots, P_{w^k}, P_{-\{w^1, \ldots, w^k\}}) = \left[ 
						\begin{aligned}
							& (m^{k^*},w^1), (m^{i},w^{i+1}) \hspace{2 mm} \forall \hspace{2 mm} i = 1, \ldots, k^*-1,\\
							& (a, \emptyset) \hspace{2 mm} \forall \hspace{2 mm} a \notin \{m^1, \ldots, m^{k^*}, w^1, \ldots, w^{k^*}\} 
						\end{aligned}
						\right] \mbox{ and}\\
						& D^M(P_{w^1}, P_{w^2}, \ldots, P_{w^k}, P_{-\{w^1, \ldots, w^k\}}) = \left[ 
						\begin{aligned}
							& (m^i,w^i) \hspace{2 mm} \forall \hspace{2 mm} i = 1, \ldots, k^*,\\
							& (a, \emptyset) \hspace{2 mm} \forall \hspace{2 mm} a \notin \{m^1, \ldots, m^{k^*}, w^1, \ldots, w^{k^*}\} 
						\end{aligned}
						\right].
					\end{aligned}
				\end{equation}
				However, \eqref{equation strongest case 2} implies that $w^1$ can manipulate the MPDA rule at $(\tilde{P}_{w^1}, P_{w^2}, \ldots, P_{w^k}, P_{-\{w^1, \ldots, w^k\}})$ via $P_{w^1}$, a contradiction to the fact that the MPDA rule is strategy-proof on $\mathcal{P}_A$. This completes the proof for Case 2.
			\end{enumerate}
			Since Cases 1 and 2 are exhaustive, this completes the proof of Lemma \ref{lemma sufficient for incompatibility}.
		\end{proof}

		\subsection{Completion of the proof of Theorem \ref{theorem sp wgsp equivalence}}

		The ``if'' part of the theorem follows from the respective definitions. We proceed to prove the ``only-if'' part. Without loss of generality, assume that $\mathcal{P}_A$ satisfies unrestricted top pairs for men. Suppose there exists a stable and strategy-proof matching rule on $\mathcal{P}_A$. By Lemma \ref{lemma only if}, it must be the MPDA rule.
		Assume for contradiction that the MPDA rule $D^M$ is not group strategy-proof on $\mathcal{P}_A$. Then, there exist $P_A \in \mathcal{P}_A$, a set of agents $A' \subseteq A$, and a preference profile $\tilde{P}_{A'}$ of the agents in $A'$ such that
		\begin{equation}\label{equation assumption 1}
			D^M_a(\tilde{P}_{A'},P_{-A'}) \mathrel{P_a} D^M_a(P_A) \mbox{ for all } a \in A'.
		\end{equation}
		For ease of presentation, we denote the matching $D^M(P_A)$ by $\mu$ and the matching $D^M(\tilde{P}_{A'},P_{-A'})$ by $\tilde{\mu}$ in this proof. Furthermore, let $\tilde{\mu}^s$ denote the (tentative) matching at the end of some step $s$ of the MPDA rule at $(\tilde{P}_{A'},P_{-A'})$.
		
		Note that by Theorem \ref{theorem structure of manipulative coalitions} and Proposition \ref{proposition update in welfare}.\ref{item structure men weakly worse}, men do not change their preferences from $P_A$ to $(\tilde{P}_{A'},P_{-A'})$ and each man weakly prefers $\mu$ to $\tilde{\mu}$ where $\mu \neq \tilde{\mu}$. Let $s^*$ be first step (of the MPDA rule) at $(\tilde{P}_{A'},P_{-A'})$ when some man, say $m$, gets rejected by $\mu(m)$. Consider the woman $w^1$ such that $\mu(m) = w^1$. Clearly, $\tilde{\mu}(w^1) \neq m$. Construct an alternating sequence $m^1, w^1, m^2, w^2, \ldots, m^k, w^k$ of distinct men and women with $m^k \equiv m$ such that
		\begin{enumerate}[(i)]
			\item $\tilde{\mu}(w^i) = m^i$ for all $i = 1, \ldots, k$, and
			
			\item $\mu(m^i) = w^{i+1}$ for all $i = 1, \ldots, k-1$.
		\end{enumerate}
		Since both the number of men and the number of women are finite, by Proposition \ref{proposition same set of matched agents}, it follows that the constructed sequence is well-defined.
		
		Since $\mu$ is stable at $P_A$ (see Remark \ref{remark gale result}), it follows from the construction of the sequence and Proposition \ref{proposition update in welfare}.\ref{item structure women weakly better} that
		\begin{subequations}\label{equation women preference}
			\begin{equation}\label{equation women preference 1}
				m^1 \mathrel{P_{w^1}} m^{k} \mathrel{P_{w^1}} \emptyset, \mbox{ and}
			\end{equation}
			\begin{equation}\label{equation women preference 2}
				m^i \mathrel{P_{w^i}} m^{i-1} \mathrel{P_{w^i}} \emptyset \mbox{ for all } i = 2, \ldots, k.
			\end{equation}
		\end{subequations}
		We distinguish the following two cases.\medskip
		
		\noindent\textbf{\textsc{Case} 1}: Suppose $\tilde{\mu}^{s^*}(w^1) = \emptyset$.
		
		Since $m^k$ gets rejected by $w^1$ in Step $s^*$ (of the MPDA rule) at $(\tilde{P}_{A'},P_{-A'})$, the fact $\tilde{\mu}^{s^*}(w^1) = \emptyset$, together with \eqref{equation women preference 1}, implies $w^1 \in A'$ and
		\begin{equation}\label{equation women updated preference 1}
			\emptyset \mathrel{\tilde{P}_{w^1}} m^k.
		\end{equation}
		Moreover, since $\tilde{\mu}$ is stable at $(\tilde{P}_{A'},P_{-A'})$ (see Remark \ref{remark gale result}), $\tilde{\mu}(w^1) = m^1$ implies $m^1 \mathrel{\tilde{P}_{w^1}} \emptyset$, which, along with \eqref{equation women updated preference 1}, yields
		\begin{equation}\label{equation women updated preference}
			m^1 \mathrel{\tilde{P}_{w^1}} \emptyset \mathrel{\tilde{P}_{w^1}} m^k.
		\end{equation}
		
		However, by Lemma \ref{lemma sufficient for incompatibility}, \eqref{equation women preference} and \eqref{equation women updated preference} together contradict the fact that the MPDA rule is both stable and strategy-proof on $\mathcal{P}_A$. This completes the proof for Case 1.\medskip
		
		\noindent\textbf{\textsc{Case} 2}: Suppose $\tilde{\mu}^{s^*}(w^1) \in M$.
		
		Consider the man $\tilde{m}$ such that $\tilde{\mu}^{s^*}(w^1) = \tilde{m}$. Since $m^k$ gets rejected by $w^1$ in Step $s^*$ (of the MPDA rule) at $(\tilde{P}_{A'},P_{-A'})$, the fact $\tilde{\mu}^{s^*}(w^1) = \tilde{m}$ implies $\tilde{m} \neq m^k$. Moreover, since man $\tilde{m}$ does not change his preference from $P_A$ to $(\tilde{P}_{A'},P_{-A'})$, by the assumption of Step $s^*$ being the first step at $(\tilde{P}_{A'},P_{-A'})$ where some man gets rejected by his match under $\mu$, $\tilde{\mu}^{s^*}(w^1) = \tilde{m}$ implies $w^1 \mathrel{R_{\tilde{m}}} \mu(\tilde{m})$. This, along with the facts $\mu(m^k) = w^1$ and $\tilde{m} \neq m^k$, yields $w^1 \mathrel{P_{\tilde{m}}} \mu(\tilde{m})$. Because of this, and since $\mu$ is stable at $P_A$ with $\mu(m^k) = w^1$, we have
		\begin{equation}\label{equation w1 pref}
			m^k \mathrel{P_{w^1}} \tilde{m}.
		\end{equation}
		Furthermore, since $m^k$ gets rejected by $w^1$ in Step $s^*$ at $(\tilde{P}_{A'},P_{-A'})$, the fact $\tilde{\mu}^{s^*}(w^1) = \tilde{m}$, together with \eqref{equation w1 pref} implies $w^1 \in A'$ and
		\begin{subequations}\label{equation woman up preference}
			\begin{equation}\label{equation woman up preference 1}
				\tilde{m} \mathrel{\tilde{P}_{w^1}} \emptyset, \mbox{ and}
			\end{equation}
			\begin{equation}\label{equation woman up preference 2}
				\tilde{m} \mathrel{\tilde{P}_{w^1}} m^k.
			\end{equation}
		\end{subequations}
		Note that \eqref{equation women preference 1} and \eqref{equation w1 pref} together imply $\tilde{m} \neq m^1$.
		
		By the definition of the MPDA rule, we have $\tilde{\mu}(w^1) \mathrel{\tilde{R}_{w^1}} \tilde{\mu}^{s^*}(w^1)$. Because of this, and since $\tilde{\mu}(w^1) = m^1$, $\tilde{\mu}^{s^*}(w^1) = \tilde{m}$, and $\tilde{m} \neq m^1$, we have $m^1 \mathrel{\tilde{P}_{w^1}} \tilde{m}$, which, along with \eqref{equation woman up preference 2}, yields
		\begin{equation}\label{equation woman up preference 3}
			m^1 \mathrel{\tilde{P}_{w^1}} \tilde{m} \mathrel{\tilde{P}_{w^1}} m^k.
		\end{equation}
		
		However, by Lemma \ref{lemma sufficient for incompatibility}, \eqref{equation women preference}, \eqref{equation w1 pref}, \eqref{equation woman up preference 1}, and \eqref{equation woman up preference 3} together contradict the fact that the MPDA rule is both stable and strategy-proof on $\mathcal{P}_A$. This completes the proof for Case 2.\medskip
		
		Since Cases 1 and 2 are exhaustive, this completes the proof of Theorem \ref{theorem sp wgsp equivalence}.
		\hfill
		\qed

		\section{Proof of Proposition \ref{proposition wgsp existence}}\label{appendix proof of proposition wgsp existence}
		
		\setcounter{equation}{0}
		\setcounter{table}{0}
		\setcounter{figure}{0}

		We use Theorem \ref{theorem sp wgsp equivalence} (which is presented after Proposition \ref{proposition wgsp existence} in the body of the paper) in the proof of Proposition \ref{proposition wgsp existence}. Because of this, we have already presented the proof of Theorem \ref{theorem sp wgsp equivalence} in Appendix \ref{appendix proof of theorem sp wgsp equivalence}.

		\begin{proof}[\textbf{Completion of the proof of Proposition \ref{proposition wgsp existence}}]
			Construct a domain of preference profiles $\mathcal{\tilde{P}}_A$ such that $\mathcal{\tilde{P}}_m = \mathbb{L}(W \cup \{\emptyset\})$ for all $m \in M$ and $\mathcal{\tilde{P}}_w = \mathcal{P}_w$ for all $w \in W$. Clearly, $\mathcal{\tilde{P}}_A$ satisfies unrestricted top pairs for men and top dominance for women. 
			
			Since $\mathcal{\tilde{P}}_A$ satisfies top dominance for women, by the corollary in \citet{alcalde1994top}, the MPDA rule is stable and strategy-proof on $\mathcal{\tilde{P}}_A$. Because of this, and since $\mathcal{\tilde{P}}_A$ satisfies unrestricted top pairs for men, by Theorem \ref{theorem sp wgsp equivalence}, it follows that the MPDA rule is also group strategy-proof on $\mathcal{\tilde{P}}_A$.
			
			However, since the MPDA rule is stable and group strategy-proof on $\mathcal{\tilde{P}}_A$, it also satisfies stability and group strategy-proofness on the smaller domain $\mathcal{P}_A$. This completes the proof of Proposition \ref{proposition wgsp existence}.
		\end{proof}

		\section{Proof of Theorem \ref{theorem single-peaked}}\label{appendix proof of theorem single-peaked}
		
		\setcounter{equation}{0}
		\setcounter{table}{0}
		\setcounter{figure}{0}

		We demonstrate the desired implications for the equivalence in turn.\medskip
		
		\noindent {\textbf{\ref{item TD} $\implies$ \ref{item DA is gsp}.}} 
		This implication follows from Proposition \ref{proposition wgsp existence}.
		
		\noindent {\textbf{\ref{item DA is gsp} $\implies$ \ref{item stable and gsp} $\implies$ \ref{item stable and sp}.}} 
		These implications are straightforward.\medskip
		
		Notice that the above implications are independent of single-peakedness, anonymity, or cyclical inclusion.\medskip
		
		\noindent {\textbf{\ref{item stable and sp} $\implies$ \ref{item TD}.}} 
		We prove the contrapositive: Suppose $\mathcal{P}_A$ satisfies top dominance for neither of the sides, then no stable matching rule on $\mathcal{P}_A$ is strategy-proof.
		Since $\mathcal{P}_A$ is an anonymous domain, for ease of presentation, we denote the common sets of admissible preferences for men and women by $\mathcal{P}_{men}$ and $\mathcal{P}_{women}$, respectively.
		
		Since $\mathcal{P}_A$ does not satisfy top dominance for men, there exist $P_1, P_2 \in \mathcal{P}_{men}$ such that for some $x \in W$ and some $y, z \in W \cup \{\emptyset\}$,
		\begin{equation*}
			\begin{aligned}
				& x \mathrel{P_1} y \mathrel{P_1} z \hspace{1 mm} \mbox{ and } \hspace{1 mm} y \mathrel{R_1} \emptyset, \mbox{ and}\\
				& x \mathrel{P_2} z \mathrel{P_2} y \hspace{1 mm} \mbox{ and } \hspace{1 mm} z \mathrel{R_2} \emptyset.
			\end{aligned}
		\end{equation*} 
		We can distinguish two options. One, where either $y$ or $z$ is the outside option $\emptyset$. The other, where both $y$ and $z$ are women. Without loss of generality, let these two options be:
		\begin{equation*}
			\begin{aligned}
				& \mbox{Option 1 (for men)}: x = w_2, y = w_1, \mbox{ and } z = \emptyset \\
				& \mbox{Option 2 (for men)}: x = w_2, y = w_1, \mbox{ and } z = w_3
			\end{aligned}
		\end{equation*}
		Notice that for both options, we have
		\begin{equation}\label{equation assumption for both cases men}
			w_2 \mathrel{P_1} w_1 \mathrel{P_1} \emptyset.
		\end{equation}
		
		Similarly, since $\mathcal{P}_A$ does not satisfy top dominance for women, there exist $\tilde{P}_1, \tilde{P}_2 \in \mathcal{P}_{women}$ such that for some $\tilde{x} \in M$ and some $\tilde{y}, \tilde{z} \in M \cup \{\emptyset\}$,
		\begin{equation*}
			\begin{aligned}
				& \tilde{x} \mathrel{\tilde{P}_1} \tilde{y} \mathrel{\tilde{P}_1} \tilde{z} \hspace{1 mm} \mbox{ and } \hspace{1 mm} \tilde{y} \mathrel{\tilde{R}_1} \emptyset, \mbox{ and}\\
				& \tilde{x} \mathrel{\tilde{P}_2} \tilde{z} \mathrel{\tilde{P}_2} \tilde{y} \hspace{1 mm} \mbox{ and } \hspace{1 mm} \tilde{z} \mathrel{\tilde{R}_2} \emptyset.
			\end{aligned}
		\end{equation*}
		We can also distinguish two options for women. One, where either $\tilde{y}$ or $\tilde{z}$ is the outside option $\emptyset$. The other, where both $\tilde{y}$ and $\tilde{z}$ are men. Without loss of generality, let these two options be:
		\begin{equation*}
			\begin{aligned}
				& \mbox{Option 1 (for women)}: \tilde{x} = m_2, \tilde{y} = m_1, \mbox{ and } \tilde{z} = \emptyset \\
				& \mbox{Option 2 (for women)}: \tilde{x} = m_2, \tilde{y} = m_1, \mbox{ and } \tilde{z} = m_3
			\end{aligned}
		\end{equation*}
		Notice that for both options, we have
		\begin{equation}\label{equation assumption for both cases women}
			m_2 \mathrel{\tilde{P}_1} m_1 \mathrel{\tilde{P}_1} \emptyset.
		\end{equation}
		
		Furthermore, since $\mathcal{P}_A$ is an anonymous domain that satisfies cyclical inclusion for both sides, \eqref{equation assumption for both cases men} implies there exists a preference $P_3 \in \mathcal{P}_{men}$ such that 
		\begin{equation}\label{equation assumption for both cases men 1}
			w_1 \mathrel{P_3} w_2 \mathrel{P_3} \emptyset,
		\end{equation}
		and \eqref{equation assumption for both cases women} implies there exists a preference $\tilde{P}_3 \in \mathcal{P}_{women}$ such that 
		\begin{equation}\label{equation assumption for both cases women 1}
			m_1 \mathrel{\tilde{P}_3} m_2 \mathrel{\tilde{P}_3} \emptyset.
		\end{equation}
		We distinguish the following four cases.\medskip
		
		\noindent\textbf{\textsc{Case} 1}: Suppose Option 1 holds for both men and women.
		
		Since $\mathcal{P}_A$ is an anonymous domain that satisfies cyclical inclusion for both sides, we can construct the preference profiles presented in Table \ref{preference profiles one side has to be TD case 1}. Here, $m_k$ denotes a man other than $m_1, m_2$ (if any), and $w_l$ denotes a woman other than $w_1, w_2$ (if any). Note that such an agent does not change his/her preference across the mentioned preference profiles.
		\begin{table}[H]
			\centering
			\begin{tabular}{@{}c|ccc|ccc@{}}
				\hline
				Preference profiles & $m_1$ & $m_2$ & $m_k$ & $w_1$ & $w_2$ & $w_l$ \\ \hline
				\hline
				$P^1_A$ & $P_3$ & $P_1$ & $\emptyset \ldots$ & $\tilde{P}_1$ & $\tilde{P}_3$ & $\emptyset \ldots$ \\
				$P^2_A$ & $P_3$ & $P_2$ & $\emptyset \ldots$ & $\tilde{P}_1$ & $\tilde{P}_3$ & $\emptyset \ldots$ \\
				$P^3_A$ & $P_3$ & $P_1$ & $\emptyset \ldots$ & $\tilde{P}_2$ & $\tilde{P}_3$ & $\emptyset \ldots$ \\
				\hline
			\end{tabular}
			\caption{Preference profiles for Case 1}
			\label{preference profiles one side has to be TD case 1}
		\end{table}
		
		For ease of presentation, let $\mu$ denote the matching $\big[(m_1,w_1), (m_2,w_2), (a,\emptyset) \hspace{2 mm} \forall \hspace{2 mm} a \in A \setminus \{m_1, m_2, w_1, w_2\}\big]$ and $\tilde{\mu}$ the matching $\big[(m_1,w_2), (m_2,w_1), (a,\emptyset) \hspace{2 mm} \forall \hspace{2 mm} a \in A \setminus \{m_1, m_2, w_1, w_2\}\big]$ for this case. 
		The sets of stable matchings at $P^1_A$, $P^2_A$, and $P^1_A$ are $\{\mu, \tilde{\mu}\}$, $\{\mu\}$, and $\{\tilde{\mu}\}$, respectively. 
		
		Fix a stable matching rule $\varphi$ on $\mathcal{P}_A$. If $\varphi(P^1_A) = \mu$, then $w_1$ can manipulate at $P^1_A$ via $\tilde{P}_2$. If $\varphi(P^1_A) = \tilde{\mu}$, then $m_2$ can manipulate at $P^1_A$ via $P_2$. This implies $\varphi$ is not strategy-proof on $\mathcal{P}_A$.\medskip
		
		\noindent\textbf{\textsc{Case} 2}: Suppose Option 1 holds for men and Option 2 holds for women.
		
		Since $\mathcal{P}_A$ is a single-peaked domain, the facts $m_2 \mathrel{\tilde{P}_1} m_1 \mathrel{\tilde{P}_1} m_3$ and $m_2 \mathrel{\tilde{P}_2} m_3 \mathrel{\tilde{P}_2} m_1$ together imply
		\begin{equation*}
			m_1 \prec_M m_2 \prec_M m_3 \hspace{1 mm} \mbox{ or } \hspace{1 mm} m_3 \prec_M m_2 \prec_M m_1.
		\end{equation*}
		This, together with \eqref{equation assumption for both cases women 1}, implies
		\begin{equation*}
			m_1 \mathrel{\tilde{P}_3} m_2 \mathrel{\tilde{P}_3} m_3.
		\end{equation*}
		
		Moreover, since $\mathcal{P}_A$ is an anonymous domain that satisfies cyclical inclusion for both sides, we can construct the preference profiles presented in Table \ref{preference profiles one side has to be TD case 2}.
		\begin{table}[H]
			\centering
			\begin{tabular}{@{}c|cccc|ccc@{}}
				\hline
				Preference profiles & $m_1$ & $m_2$ & $m_3$ & $m_k$ & $w_1$ & $w_2$ & $w_l$ \\ \hline
				\hline
				$P^1_A$ & $P_3$ & $P_1$ & $P_3$ & $\emptyset \ldots$ & $\tilde{P}_1$ & $\tilde{P}_3$ & $\emptyset \ldots$ \\
				$P^2_A$ & $P_3$ & $P_2$ & $P_3$ & $\emptyset \ldots$ & $\tilde{P}_1$ & $\tilde{P}_3$ & $\emptyset \ldots$ \\
				$P^3_A$ & $P_3$ & $P_1$ & $P_3$ & $\emptyset \ldots$ & $\tilde{P}_2$ & $\tilde{P}_3$ & $\emptyset \ldots$ \\
				\hline
			\end{tabular}
			\caption{Preference profiles for Case 2}
			\label{preference profiles one side has to be TD case 2}
		\end{table}
		
		Using a similar argument as for Case 1, it follows that no stable matching rule on $\mathcal{P}_A$ is strategy-proof.\medskip
		
		\noindent\textbf{\textsc{Case} 3}: Suppose Option 2 holds for men and Option 1 holds for women.
		
		Since $\mathcal{P}_A$ is a single-peaked domain, the facts $w_2 \mathrel{P_1} w_1 \mathrel{P_1} w_3$ and $w_2 \mathrel{P_2} w_3 \mathrel{P_2} w_1$ together imply
		\begin{equation*}
			w_1 \prec_W w_2 \prec_W w_3 \hspace{1 mm} \mbox{ or } \hspace{1 mm} w_3 \prec_W w_2 \prec_W w_1.
		\end{equation*}
		This, together with \eqref{equation assumption for both cases men 1}, implies
		\begin{equation*}
			w_1 \mathrel{P_3} w_2 \mathrel{P_3} w_3.
		\end{equation*}
		
		Moreover, since $\mathcal{P}_A$ is an anonymous domain that satisfies cyclical inclusion for both sides, we can construct the preference profiles presented in Table \ref{preference profiles one side has to be TD case 3}.
		\begin{table}[H]
			\centering
			\begin{tabular}{@{}c|ccc|cccc@{}}
				\hline
				Preference profiles & $m_1$ & $m_2$ & $m_k$ & $w_1$ & $w_2$ & $w_3$ & $w_l$ \\ \hline
				\hline
				$P^1_A$ & $P_3$ & $P_1$ & $\emptyset \ldots$ & $\tilde{P}_1$ & $\tilde{P}_3$ & $\tilde{P}_3$ & $\emptyset \ldots$ \\
				$P^2_A$ & $P_3$ & $P_2$ & $\emptyset \ldots$ & $\tilde{P}_1$ & $\tilde{P}_3$ & $\tilde{P}_3$ & $\emptyset \ldots$ \\
				$P^3_A$ & $P_3$ & $P_1$ & $\emptyset \ldots$ & $\tilde{P}_2$ & $\tilde{P}_3$ & $\tilde{P}_3$ & $\emptyset \ldots$ \\
				\hline
			\end{tabular}
			\caption{Preference profiles for Case 3}
			\label{preference profiles one side has to be TD case 3}
		\end{table}
		
		Using a similar argument as for Case 1, it follows that no stable matching rule on $\mathcal{P}_A$ is strategy-proof.\medskip
		
		\noindent\textbf{\textsc{Case} 4}: Suppose Option 2 holds for both men and women.
		
		Since $\mathcal{P}_A$ is a single-peaked domain, the facts $w_2 \mathrel{P_1} w_1 \mathrel{P_1} w_3$ and $w_2 \mathrel{P_2} w_3 \mathrel{P_2} w_1$ together imply
		\begin{equation*}
			w_1 \prec_W w_2 \prec_W w_3 \hspace{1 mm} \mbox{ or } \hspace{1 mm} w_3 \prec_W w_2 \prec_W w_1.
		\end{equation*}
		This, together with \eqref{equation assumption for both cases men 1}, implies
		\begin{equation*}
			w_1 \mathrel{P_3} w_2 \mathrel{P_3} w_3.
		\end{equation*}
		Using a similar argument as above, we also have 
		\begin{equation*}
			m_1 \mathrel{\tilde{P}_3} m_2 \mathrel{\tilde{P}_3} m_3.
		\end{equation*}
		
		Moreover, since $\mathcal{P}_A$ is an anonymous domain that satisfies cyclical inclusion for both sides, we can construct the preference profiles presented in Table \ref{preference profiles one side has to be TD case 4}.
		\begin{table}[H]
			\centering
			\begin{tabular}{@{}c|cccc|cccc@{}}
				\hline
				Preference profiles & $m_1$ & $m_2$ & $m_3$ & $m_k$ & $w_1$ & $w_2$ & $w_3$ & $w_l$ \\ \hline
				\hline
				$P^1_A$ & $P_3$ & $P_1$ & $P_3$ & $\emptyset \ldots$ & $\tilde{P}_1$ & $\tilde{P}_3$ & $\tilde{P}_3$ & $\emptyset \ldots$ \\
				$P^2_A$ & $P_3$ & $P_2$ & $P_3$ & $\emptyset \ldots$ & $\tilde{P}_1$ & $\tilde{P}_3$ & $\tilde{P}_3$ & $\emptyset \ldots$ \\
				$P^3_A$ & $P_3$ & $P_1$ & $P_3$ & $\emptyset \ldots$ & $\tilde{P}_2$ & $\tilde{P}_3$ & $\tilde{P}_3$ & $\emptyset \ldots$ \\
				\hline
			\end{tabular}
			\caption{Preference profiles for Case 4}
			\label{preference profiles one side has to be TD case 4}
		\end{table}
		
		\begin{enumerate}[(i)]
			\item Suppose $w_3 \mathrel{P_3} \emptyset$ and $m_3 \mathrel{\tilde{P}_3} \emptyset$.
			
			Let $\mu$ denote the matching $\big[(m_1,w_1), (m_2,w_2), (m_3,w_3), (a,\emptyset) \hspace{2 mm} \forall \hspace{2 mm} a \in A \setminus \{m_1, m_2, m_3, w_1, w_2, w_3\}\big]$ and $\tilde{\mu}$ the matching $\big[(m_1,w_2), (m_2,w_1), (m_3,w_3), (a,\emptyset) \hspace{2 mm} \forall \hspace{2 mm} a \in A \setminus \{m_1, m_2, m_3, w_1, w_2, w_3\}\big]$. 
			The sets of stable matchings at $P^1_A$, $P^2_A$, and $P^1_A$ are $\{\mu, \tilde{\mu}\}$, $\{\mu\}$, and $\{\tilde{\mu}\}$, respectively. 
			
			Fix a stable matching rule $\varphi$ on $\mathcal{P}_A$. If $\varphi(P^1_A) = \mu$, then $w_1$ can manipulate at $P^1_A$ via $\tilde{P}_2$. If $\varphi(P^1_A) = \tilde{\mu}$, then $m_2$ can manipulate at $P^1_A$ via $P_2$. This implies $\varphi$ is not strategy-proof on $\mathcal{P}_A$.
			
			\item Suppose either $\emptyset \mathrel{P_3} w_3$ or $\emptyset \mathrel{\tilde{P}_3} m_3$ or both.
			
			Using a similar argument as for Case 1, it follows that no stable matching rule on $\mathcal{P}_A$ is strategy-proof.
		\end{enumerate}
		Since Cases 1 -- 4 are exhaustive, this completes the proof of Theorem \ref{theorem single-peaked}.
		\hfill
		\qed

	\end{appendices}

	\bibliographystyle{plainnat}
	\setcitestyle{numbers}
	\bibliography{mybib}

\end{document}